\documentclass[11pt]{article}
\usepackage[margin=1in]{geometry}
\usepackage[utf8]{inputenc}
\usepackage{amsmath,amssymb,mathtools,amsthm}

\DeclareMathOperator*{\argmax}{arg\,max}
\usepackage{graphicx}
\usepackage[colorlinks=true,linkcolor=black,citecolor=blue,urlcolor=blue]{hyperref}

\theoremstyle{plain}
\newtheorem{theorem}{Theorem}
\newtheorem{proposition}[theorem]{Proposition}
\newtheorem{lemma}[theorem]{Lemma}

\theoremstyle{definition}
\newtheorem{definition}[theorem]{Definition}
\newtheorem{assumption}[theorem]{Assumption}
\newtheorem{example}[theorem]{Example}
\theoremstyle{remark}
\newtheorem{remark}[theorem]{Remark}

\newcommand{\R}{\mathbb{R}}
\newcommand{\E}{\mathbb{E}}
\newcommand{\PP}{\mathbb{P}}
\newcommand{\F}{\mathcal{F}}
\newcommand{\G}{\mathcal{G}}
\newcommand{\Var}{\operatorname{Var}}

\newcommand{\sgn}{\operatorname{sgn}}
\newcommand{\dop}{\mathrm{do}}
\newcommand{\Lt}{L^2(\PP\otimes dt)}
\newcommand{\hra}{\hookrightarrow}

\title{A sharp order-three obstruction to the aggregation of conditional
price-of-risk attribution}
\author{Alejandro Rodriguez Dominguez\thanks{Miralta Finance Bank S.A., Madrid,
Spain, and Department of Computer Science, University of Reading, UK.
\texttt{arodriguez@miraltabank.com}}}
\date{\today}

\begin{document}
\maketitle

\begin{abstract}
We study the squared price-of-risk premium of a portfolio---an integrated
conditional squared Sharpe-ratio functional, not an expected excess return---and its
attribution to causal drivers. Relative to a declared admissible benchmark it
decomposes into intervention-stable premium, a signed causal distortion (the
confounding wedge), and a nonnegative information loss; the loss is an $L^2$
projection residual, the wedge is not. The decomposition is well posed exactly when
the driver filtration is immersed in the price filtration. It need not aggregate
across portfolios pooling drivers: we identify an order-three obstruction that is
invisible to every singleton and pairwise admissibility screen---each one- and
two-driver sub-book is immersed while the pooled triple reveals a future
innovation---the analogue of Bernstein's pairwise-but-not-mutually-independent
triple, and minimal relative to such pairwise diagnostics. We separate its two ingredients,
combinatorial masking and anticipative coupling. The failure is one of immersion,
not of no-arbitrage. Experiments on synthetic single- and multi-driver panels show
the decomposition and its causal correction are estimable, and that a
permutation-calibrated screen detects planted order-three leakage with controlled
false positives.

\medskip\noindent\textbf{Keywords:} price of risk , causal inference , filtration enlargement ,
immersion , portfolio attribution

\noindent\textbf{MSC 2020:} 91G10 , 60G44 , 62H22
\end{abstract}

\section{Introduction}\label{sec:intro}

The use of conditioning information in portfolio choice has a long history. Hansen
and Richard \cite{hansenrichard} characterized the mean--variance frontier
attainable by an investor who conditions a dynamic strategy on an information set,
and Ferson and Siegel \cite{fersonsiegel} derived the closed-form optimal weights
and the associated efficiency tests; Abhyankar, Basu, and Stremme \cite{abs}
showed that the increment in the maximal squared Sharpe ratio afforded by
additional conditioning information equals the $R^2$ of a predictive regression.
In parallel, the theory of initial and progressive enlargement of filtrations
\cite{jacod} and the quantification of the information drift it induces
\cite{adi} provide the probabilistic apparatus for comparing pricing under nested
information sets. This paper joins the two literatures. We treat the
conditioning information of a portfolio as the filtration generated by its
\emph{causal drivers}, decompose the attainable conditional squared Sharpe ratio
into causal and non-causal components, and identify the precise sense in which the
admissibility of the driver filtration---its immersion in the price
filtration---governs whether that decomposition is well posed and whether it
aggregates across a book of portfolios.

Let $(\Omega,\F,(\F_t),\PP)$ carry the price filtration $\F$. For a portfolio with
excess-return dynamics $dr_t=\mu_t\,dt+\sigma_t\,dW_t$, the price of risk is
$\lambda_t=\sigma_t^{-1}\mu_t$, the integrand of the Girsanov change to an
equivalent martingale measure. Conditioning on a driver filtration $\G$ and
integrating the squared projection defines
$\pi(\G)=\E\int_0^T\E[\lambda_t\mid\G_t]^2\,dt$. By \cite{hansenrichard}, $\pi(\G)$
is the maximal conditional squared Sharpe ratio attainable from $\G$, and we
establish (Proposition~\ref{prop:investor}) that it is the value of a conditional
mean--variance problem; $\pi$ is therefore not a descriptive attribution statistic
but the value realized by an admissible $\G$-measurable trading strategy. The
object lives in the pricing regime: the price of risk mediates the change of
measure, so $\pi$ is a Sharpe-type functional rather than an expected excess
return. We work in $L^2(\PP\otimes dt)$ because its inner-product structure
supplies the orthogonal projection on which the decomposition rests. The
distinction between the existence of a deflator and of an equivalent local
martingale measure, invoked when we separate immersion failure from arbitrage,
follows \cite{kk,kardaras}.

The paper makes four contributions. First (Section~\ref{sec:decomp}), it gives a
\emph{reference-relative} decomposition of the conditional squared price-of-risk
premium $\mathrm{SPR}$ into intervention-stable premium, a signed causal distortion
(the confounding wedge), and a nonnegative admissible information loss, the loss
measured relative to a declared admissible benchmark. Second, it proves the causal
distortion is \emph{not} a projection residual---it is signed and non-Pythagorean
(Proposition~\ref{prop:geometry})---thereby separating causal bias from
informational incompleteness, and it isolates the wedge identity
(Theorem~\ref{thm:gap}, requiring only the identification assumption) from the
generic-nonzero converse (Theorem~\ref{thm:genericwedge}, requiring faithfulness).
Third (Sections~\ref{sec:agg}--\ref{sec:order3}), it identifies a minimal
order-three obstruction to admissible aggregation that is invisible to every
singleton and pairwise screen: each one- and two-driver sub-book is immersed while
the pooled book reveals a future innovation, the filtration-theoretic analogue of
Bernstein's pairwise-but-not-mutually-independent triple, minimal relative to
pairwise diagnostics (Definition~\ref{def:masking}, Theorem~\ref{thm:order3}). Fourth, it separates masking from anticipation
(Remark~\ref{rem:twofactor}): adapted crowding reproduces the algebra of the
obstruction without breaking immersion, so the obstruction requires both
combinatorial masking and anticipative coupling. We further link projection to
realized premium, distinguishing admissible realized premium from inadmissible
diagnostic premium (Proposition~\ref{prop:monetize}), and characterize the maximal
jointly admissible sub-books as the independent sets of a masking hypergraph
(Section~\ref{sec:subbooks}).

The immersion framework and the masking construction are developed in the companion
\cite{admissible}, from which we import the independent-enlargement lemma and which
we restate with explicit revelation timing so the present results are self-contained.
The driver-selection and causal-allocation methodology that an optimizer would
employ is developed in \cite{cpcm}; we treat such an optimizer generically. Causal
discovery---the estimation of the driver graph itself \cite{spirtes,peters,shimizu}---is
orthogonal and upstream to the attribution problem studied here.

\section{Squared price-of-risk premium and admissible driver filtrations}\label{sec:setup}

Fix a filtered probability space $(\Omega,\F,(\F_t)_{0\le t\le T},\PP)$ carrying the
price filtration $\F$, the usual augmentation of a Brownian filtration \cite{protter}. For a fixed
portfolio, $\lambda_t=\mu_t/\sigma_t$ with $\sigma_t>0$ is its price-of-risk
process, treated as an element of the Hilbert space $\Lt$ of square-integrable
$dt$-progressive processes, with inner product
$\langle a,b\rangle=\E\int_0^T a_t b_t\,dt$ and norm $\|\cdot\|$.

Throughout, \emph{premium} means a squared price-of-risk functional, equivalently an
integrated conditional squared Sharpe-ratio value; it is \emph{not} an expected
excess-return level. Expected realized gain appears only after a trading strategy is
specified (Section~\ref{sec:bridge}). We name the central object the
\emph{squared price-of-risk premium} (or Sharpe premium) of a driver filtration $\G$,
\[
\mathrm{SPR}(\G):=\E\int_0^T\E[\lambda_t\mid\G_t]^2\,dt=\|P_\G\lambda\|^2,
\]
the squared norm of the $L^2$ projection $P_\G\lambda$ of the price of risk onto
$\G$. We retain the lighter notation $\pi(\G)=\mathrm{SPR}(\G)$ in informal passages
but use $\mathrm{SPR}$ in the technical statements where the distinction from an
expected return matters. The choice of $L^2$ is not a matter of convenience: the
Sharpe premium is a second-moment object, so the relevant geometry is that of
variance, and $L^2$ is the unique $L^p$ space whose inner product supplies the
orthogonal projection and Pythagorean identity on which the loss term, and its
distinction from the signed wedge, depend.

A \emph{driver information structure} is a $\sigma$-field $H$ (the information
carried by drivers $Y$, valued in $M\subseteq\R^k$) together with the enlarged
filtration $\G$, the usual augmentation of $\F\vee\sigma(H_t)$. The structure is
\emph{admissible} if $\F\hra\G$, i.e.\ every $(\PP,\F)$-martingale remains a
$(\PP,\G)$-martingale (immersion). Admissibility is the condition under which
conditioning on the drivers does not anticipate the price innovation, so that the
projected price of risk is a legitimate pricing object. We take from \cite{cpcm}
the common-driver setting in which each portfolio conditional mean is measurable
with respect to a shared driver manifold; we adopt it as setting rather than
develop it.

\begin{lemma}[Independence preserves immersion, initial and delayed]\label{lem:indep}
If $H$ is a $\sigma$-field independent of $\F_T$, the initial enlargement of $\F$
by $H$ preserves immersion. The same holds for the delayed enlargement
$\G_t=\F_t\vee\sigma(H)\mathbf 1_{\{t\ge \tau\}}$ at a deterministic time $\tau$.
Hence drivers jointly independent of the terminal reference $\sigma$-field admit a
globally immersed union filtration, whether revealed initially or at $\tau$.
\end{lemma}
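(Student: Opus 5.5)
The approach is to verify immersion directly through its conditional-expectation characterization: $\F\hra\G$ holds iff for every $t$ and every bounded $\F_T$-measurable $\xi$ one has $\E[\xi\mid\G_t]=\E[\xi\mid\F_t]$. Every $(\PP,\F)$-martingale $M$ has the form $M_t=\E[M_T\mid\F_t]$, so once this identity is established we get $\E[M_T\mid\G_t]=M_t$. Since $M_t$ is $\G_t$-measurable, $M$ is then a $\G$-martingale. Passage to the usual augmentation is harmless, because augmentation by null sets and right-limits preserves conditional expectations of bounded variables: use right-continuity of the $\F$-martingale $\E[\xi\mid\F_s]$ and backward martingale convergence along $s\downarrow t$.

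\textbf{Initial enlargement.} Take $\G_t^0=\F_t\vee\sigma(H)$. I will show that for bounded $\F_T$-measurable $\xi$, $\E[\xi\mid\G^0_t]=\E[\xi\mid\F_t]$. The right-hand side is $\G^0_t$-measurable, so it suffices to test against a $\pi$-system generating $\G^0_t$, namely $\mathbf 1_A\mathbf 1_B$ with $A\in\F_t$ and $B\in H$. Since $\xi\mathbf 1_A$ is $\F_T$-measurable and $H$ is independent of $\F_T$,
\[
\E[\xi\mathbf 1_A\mathbf 1_B]=\E[\xi\mathbf 1_A]\,\PP(B)=\E\bigl[\E[\xi\mid\F_t]\mathbf 1_A\bigr]\PP(B)=\E\bigl[\E[\xi\mid\F_t]\mathbf 1_A\mathbf 1_B\bigr].
\]
The last equality again uses independence, since $\E[\xi\mid\F_t]\mathbf 1_A$ is $\F_T$-measurable. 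A monotone class argument then extends the identity to all of $\G^0_t$.

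\textbf{Delayed enlargement.} Here I interpret $\G_t=\F_t$ for $t<\tau$ and $\G_t=\F_t\vee\sigma(H)$ for $t\ge\tau$. This is a filtration because $\tau$ is deterministic, and it is the main point to state carefully, since the displayed formula is only notational shorthand. For $t<\tau$ the identity is trivial. For $t\ge\tau$ it is exactly the initial case above.

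\textbf{Union filtration.} Drivers $Y^1,\dots,Y^n$ that are \emph{jointly} independent of $\F_T$ give a single $\sigma$-field $H=\bigvee_i\sigma(Y^i)$ independent of $\F_T$. Applying the two cases to this $H$ gives a globally immersed union filtration. The same works with a common revelation time. With distinct deterministic revelation times $\tau_i$, I would argue inductively over the finitely many time intervals, where at each $t$ the revealed $\sigma$-field $\bigvee_{\tau_i\le t}\sigma(Y^i)$ is still independent of $\F_T$.

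I expect no serious obstacle. The one subtle point is that the hypothesis must be joint independence: pairwise independence of each $Y^i$ from $\F_T$ would not suffice for the $\pi$-system step. This is precisely the gap the later order-three construction exploits.
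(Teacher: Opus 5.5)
Your proof is correct and is essentially the paper's argument. Both verify immersion by factorizing expectations over a generating class of $\F_t\vee\sigma(H)$, using only the independence of $H$ from $\F_T$: the paper tests martingale increments $M_t-M_s$ against $h(H)G$, while you test $\xi-\E[\xi\mid\F_t]$ against $\mathbf 1_A\mathbf 1_B$. Your pointwise-in-$t$ form $\E[\xi\mid\G_t]=\E[\xi\mid\F_t]$ makes the gluing at $\tau$ automatic, and it explicitly covers the usual augmentation and distinct revelation times, which the paper leaves implicit.
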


\begin{proof}
For the initial case it suffices to verify the $(\mathcal H')$ immersion criterion:
every $(\PP,\F)$-martingale is a $(\PP,\F\vee\sigma(H))$-martingale. Let $M$ be a
bounded $(\PP,\F)$-martingale and $s<t$. For bounded $\F_s$-measurable $G$ and
bounded measurable $h$, independence of $H$ from $\F_T\supseteq\sigma(M_t,M_s,G)$
gives
\[
\E[h(H)\,G\,(M_t-M_s)]=\E[h(H)]\,\E[G\,(M_t-M_s)]=0,
\]
the last equality by the $(\PP,\F)$-martingale property. Since such $h(H)G$ generate
$\F_s\vee\sigma(H)$, we get $\E[M_t-M_s\mid\F_s\vee\sigma(H)]=0$, i.e.\ $M$ is a
$(\PP,\F\vee\sigma(H))$-martingale. For the delayed case, nothing is adjoined on
$[0,\tau)$, so immersion is trivial there; on $[\tau,T]$ the added field $\sigma(H)$
is independent of $\F_T$, so the initial-enlargement argument applies verbatim to
$(\F_t)_{t\ge\tau}$, and the two regimes glue at $\tau$ because $M_\tau$ is
$\F_\tau$-measurable. The union-filtration statement follows by taking $H$ the joint
field of the independent drivers. See \cite{admissible} for the general theory.
\end{proof}

Let $\G^\star$ be the \emph{declared admissible benchmark}: the largest driver
filtration the analyst retains after admissibility screening, against which losses
are measured. It is not an omniscient or ``true'' filtration; $\lambda$ is taken
$\G^\star$-measurable by the modeling convention that $\G^\star$ is the reference
universe, so the loss term below is not absolute unexplained premium but unexplained
premium \emph{relative to this declared universe}. In applications $\G^\star$ is the
most complete admissible driver set available.

\begin{remark}[Reference-relative decomposition]\label{rem:refrel}
If $\G^\star_1\subseteq\G^\star_2$ are two admissible benchmarks, then for a fixed
driver set $A$ the intervention-stable component $\pi^{\dop}(A)$ and the confounding
wedge are unchanged---both are defined intrinsically from $Y_A$ and the causal
kernels, without reference to $\G^\star$---while the information loss increases by
exactly the additional projection residual
$\|\,P_{\G^\star_2}\lambda-P_{\G^\star_1}\lambda\,\|^2\ge0$ between the two
benchmarks. The decomposition is thus reference-relative in its loss term only, and
the causal terms are benchmark-invariant.
\end{remark}

\section{Reference-relative causal decomposition}\label{sec:decomp}

We first record that the premium is the value of a portfolio problem, then
decompose it.

\begin{proposition}[Investor problem behind $\pi$]\label{prop:investor}
Fix information $\G_t$ and a portfolio with excess-return process
$dr_t=\mu_t\,dt+\sigma_t\,dW_t$, $\sigma_t>0$ and $\G_t$-measurable (the drivers
carry the conditional volatility), price of risk $\lambda_t=\mu_t/\sigma_t$. For a
$\G$-predictable position $\phi$, the discounted wealth $V^\phi$ with
$dV^\phi_t=\phi_t\,dr_t$ has instantaneous conditional drift
$\phi_t\E[\mu_t\mid\G_t]$ and instantaneous conditional quadratic variation rate
$\phi_t^2\sigma_t^2$ (since $\sigma_t$ is $\G_t$-measurable). The pointwise
mean--variance rate functional
\[
J_t(\phi):=\phi_t\,\E[\mu_t\mid\G_t]-\tfrac12\,\phi_t^2\,\sigma_t^2
\]
is maximized over $\G_t$-measurable $\phi_t$ at
$\phi_t^\star=\E[\mu_t\mid\G_t]/\sigma_t^2$, with maximal rate
$\tfrac12\E[\lambda_t\mid\G_t]^2$. Hence
\[
\pi(\G)=\E\int_0^T\E[\lambda_t\mid\G_t]^2\,dt
=2\,\E\int_0^T \max_{\phi_t\ \G_t\text{-meas.}} J_t(\phi)\,dt,
\]
so $\pi(\G)$ is twice the integrated maximal conditional mean--variance rate
attainable from $\G$, equivalently the integrated maximal conditional squared Sharpe
ratio of \cite{hansenrichard}. The benchmark $\pi(\G^\star)$ is the value under the
richest admissible information, and the loss $\pi(\G^\star)-\pi(\G)$ is the
mean--variance tradeoff forgone by conditioning on $\G$ rather than $\G^\star$.
\end{proposition}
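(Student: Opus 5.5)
The plan is to prove Proposition~\ref{prop:investor} in three steps: compute the conditional drift and quadratic-variation rate of $V^\phi$, maximize $J_t$ pointwise, and then integrate and take expectations.

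\textbf{Step 1 (drift and variation rate).} Fix a $\G$-predictable $\phi$ with $\E\int_0^T\phi_t^2\sigma_t^2\,dt<\infty$. Write
\[
dV^\phi_t=\phi_t\mu_t\,dt+\phi_t\sigma_t\,dW_t.
\]
Since $\phi_t$ is $\G_t$-measurable, the $\G$-conditional drift density is $\E[\phi_t\mu_t\mid\G_t]=\phi_t\E[\mu_t\mid\G_t]$. The quadratic variation $d\langle V^\phi\rangle_t=\phi_t^2\sigma_t^2\,dt$ is computed pathwise, so it does not depend on the filtration. Because $\sigma_t$ is $\G_t$-measurable, this rate is already $\G_t$-measurable. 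I will interpret ``instantaneous conditional drift'' in the sense of the $\G$-optional projection of the drift, i.e.\ the drift in the $\G$-semimartingale decomposition. Strictly speaking, this needs the martingale part $\int\phi\sigma\,dW$ to have zero $\G$-conditional increments. That holds when $\F\hra\G$, and otherwise the statement is read as the projected drift. I will flag this as the one delicate point and define the drift as $\E[\phi_t\mu_t\mid\G_t]$ directly, which is what $J_t$ uses.

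\textbf{Step 2 (pointwise maximization).} For fixed $(t,\omega)$, $J_t(\phi)$ is a concave quadratic in the scalar $\phi_t$, because $\sigma_t^2>0$. Setting the first-order condition $\E[\mu_t\mid\G_t]-\phi_t\sigma_t^2=0$ gives
\[
\phi^\star_t=\frac{\E[\mu_t\mid\G_t]}{\sigma_t^2},
\]
which is $\G_t$-measurable. Substituting back gives the maximal value $\tfrac12\E[\mu_t\mid\G_t]^2/\sigma_t^2$. Because $\sigma_t$ is $\G_t$-measurable, $\E[\mu_t\mid\G_t]/\sigma_t=\E[\mu_t/\sigma_t\mid\G_t]=\E[\lambda_t\mid\G_t]$, so the maximum equals $\tfrac12\E[\lambda_t\mid\G_t]^2$. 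This pull-out step is exactly where the hypothesis on $\sigma$ is used, and it is the main point of the proposition. Any other $\G_t$-measurable $\phi_t$ does no better pointwise. Hence the maximum over $\G_t$-measurable $\phi_t$ coincides with the pointwise maximum, and it is attained by a measurable selector.

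\textbf{Step 3 (integration and measurability).} Multiply by $2$, integrate over $[0,T]$, and take expectations to obtain the displayed identity with $\pi(\G)=\|P_\G\lambda\|^2$. This requires $\lambda\in\Lt$, which gives finiteness by conditional Jensen.

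The main obstacle is regularity. We must check that $\phi^\star$ admits a $\G$-predictable (not merely progressive) version, so that it is an admissible integrand. We must also check that $\E[\lambda_t\mid\G_t]$ has a jointly measurable version. For the latter I would use the optional projection of $\lambda$ onto $\G$. For the former I would argue that a $dt$-a.e.\ modification suffices, since the functional only involves $dt\otimes d\PP$ integrals, and that $\phi^\star\sigma\in\Lt$ follows from $\lambda\in\Lt$.

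The closing interpretations then follow directly. The Hansen--Richard statement is the identity $\max_\phi(\text{drift})^2/(\text{variance rate})=\E[\lambda_t\mid\G_t]^2$. The loss interpretation is $\pi(\G^\star)-\pi(\G)$, which is nonnegative by the Pythagorean identity $\|P_{\G^\star}\lambda\|^2-\|P_\G\lambda\|^2=\|P_{\G^\star}\lambda-P_\G\lambda\|^2$ for $\G\subseteq\G^\star$.
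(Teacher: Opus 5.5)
Your proposal is correct and follows essentially the same route as the paper: compute the conditional drift and variance rates of $V^\phi$, maximize the concave scalar quadratic pointwise at $\phi^\star_t=\E[\mu_t\mid\G_t]/\sigma_t^2$, pull the $\G_t$-measurable $\sigma_t$ through the conditional expectation, and integrate. The differences are minor. The paper reads the rates off increments over $[t,t+h]$ via It\^o's isometry; this tacitly assumes $\int\phi\sigma\,dW$ has zero $\G$-conditional increments, the caveat you rightly flag, which holds under immersion and also by the tower property when $\G\subseteq\F$. The paper also obtains $\pi(\G)\le\pi(\G^\star)$ from monotonicity of the optimum rather than from your Pythagorean identity.
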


\begin{proof}
The drift and quadratic-variation rates of $V^\phi$ follow from It\^o's isometry:
over $[t,t+h]$, $\E[V^\phi_{t+h}-V^\phi_t\mid\G_t]=\E[\int_t^{t+h}\phi_s\mu_s\,ds
\mid\G_t]$ has rate $\phi_t\E[\mu_t\mid\G_t]$, and
$\Var(V^\phi_{t+h}-V^\phi_t\mid\G_t)=\E[\int_t^{t+h}\phi_s^2\sigma_s^2\,ds\mid\G_t]$
has rate $\phi_t^2\sigma_t^2$, the cross terms vanishing as $o(h)$. Maximizing the
scalar quadratic $\phi\,m-\tfrac12\phi^2 v$ with $m=\E[\mu_t\mid\G_t]$ and
$v=\sigma_t^2$ gives $\phi^\star=m/v$ and value $m^2/(2v)
=\E[\mu_t\mid\G_t]^2/(2\sigma_t^2)$. Because $\sigma_t$ is $\G_t$-measurable,
$\E[\mu_t\mid\G_t]/\sigma_t=\E[\lambda_t\mid\G_t]$, so the maximal rate is
$\tfrac12\E[\lambda_t\mid\G_t]^2$. Integrating over $[0,T]$ and taking expectations,
and using the Hansen--Richard identification of the maximal conditional squared
Sharpe ratio with $\E[\lambda_t\mid\G_t]^2$, gives the stated value; monotonicity of
the optimum in the conditioning $\sigma$-field gives $\pi(\G)\le\pi(\G^\star)$.
\end{proof}

The decomposition rests on two classical ingredients, stated as recalls. The first
is the Hansen--Richard projection: conditioning on coarser information can only
lower the attainable squared Sharpe ratio, the shortfall being an integrated
conditional variance.

\begin{lemma}[Projection optimality; Hansen--Richard \cite{hansenrichard}]\label{thm:projection}
Let $\lambda_t(\G)=\E[\lambda_t\mid\G_t]$ and
$\pi(\G)=\E\int_0^T\lambda_t(\G)^2\,dt=\|\E[\lambda\mid\G]\|^2$, and let
$\G^\star$ be the reference information set against which losses are measured: the
filtration of the richest admissible driver structure entertained (of type
\textup{(D1)}), so that $\lambda$ is $\G^\star$-measurable by construction. It is a
benchmark, not an object that must be known---the decomposition below measures
every shortfall relative to it, and in applications it is the most complete
admissible driver set available rather than a claim about a true model. Then for any subfiltration
$\G\subseteq\G^\star$, by the orthogonality of nested $L^2$ projections,
\[
\pi(\G^\star)-\pi(\G)
=\big\|\E[\lambda\mid\G^\star]-\E[\lambda\mid\G]\big\|^2 ,
\]
and since $\lambda$ is $\G^\star$-measurable so that $\E[\lambda\mid\G^\star]=\lambda$,
this equals the integrated conditional variance,
\[
\pi(\G)=\pi(\G^\star)-\E\int_0^T\Var(\lambda_t\mid\G_t)\,dt\le\pi(\G^\star),
\]
with equality iff $\lambda$ is $\G$-measurable. We record this only to fix notation
for the loss term; it is the conditional-expectation projection of
\cite{hansenrichard}.
\end{lemma}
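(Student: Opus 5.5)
The argument is the Pythagorean identity for nested closed subspaces of $\Lt$. First I fix the subspaces: for a filtration $\G$, let $\mathcal K_\G\subseteq\Lt$ be the closed subspace of $dt$-progressive processes that are $\G$-progressive, i.e.\ whose value at $t$ is $\G_t$-measurable. Then I identify the orthogonal projection onto $\mathcal K_\G$ with the pointwise conditional expectation, $P_\G\lambda=(\E[\lambda_t\mid\G_t])_t$, taking an optional (or progressive) version so the result lies in $\Lt$.

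To verify this identification, take $b\in\mathcal K_\G$. Fubini and the tower property give
\[
\E\int_0^T(\lambda_t-\E[\lambda_t\mid\G_t])b_t\,dt=\int_0^T\E\big[b_t(\lambda_t-\E[\lambda_t\mid\G_t])\big]dt=0,
\]
because $b_t$ is $\G_t$-measurable. This computation needs $\lambda_tb_t$ to be integrable, which Cauchy--Schwarz in $\Lt$ supplies. Conditional Jensen then bounds $\|P_\G\lambda\|\le\|\lambda\|$, so $P_\G\lambda$ is indeed in $\Lt$.

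Next, $\G\subseteq\G^\star$ gives $\mathcal K_\G\subseteq\mathcal K_{\G^\star}$, so $P_\G=P_\G P_{\G^\star}$. Pointwise this is just the tower property, $\E[\E[\lambda_t\mid\G^\star_t]\mid\G_t]=\E[\lambda_t\mid\G_t]$. Write $P_{\G^\star}\lambda=P_\G\lambda+(P_{\G^\star}\lambda-P_\G\lambda)$. The second summand is orthogonal to $\mathcal K_\G$, so expanding $\|P_{\G^\star}\lambda\|^2$ yields the first display:
\[
\pi(\G^\star)-\pi(\G)=\|P_{\G^\star}\lambda-P_\G\lambda\|^2 .
\]

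For the second display I use the convention that $\lambda$ is $\G^\star$-progressive. This makes $P_{\G^\star}\lambda=\lambda$. By Fubini,
\[
\|\lambda-P_\G\lambda\|^2=\int_0^T\E\big[(\lambda_t-\E[\lambda_t\mid\G_t])^2\big]dt=\E\int_0^T\Var(\lambda_t\mid\G_t)\,dt,
\]
using $\E[\Var(\lambda_t\mid\G_t)]=\E[(\lambda_t-\E[\lambda_t\mid\G_t])^2]$. The inequality $\pi(\G)\le\pi(\G^\star)$ is then immediate.

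For the equality case, equality holds iff $\lambda=P_\G\lambda$ in $\Lt$. This is equivalent to $\lambda$ having a $\G$-progressive version, i.e.\ $\lambda_t$ being $\G_t$-measurable for $dt\otimes d\PP$-a.e.\ $(t,\omega)$, which is what "$\lambda$ is $\G$-measurable" means here.

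The only real obstacle is the measurability bookkeeping. The conditional expectations must be chosen jointly measurable in $(t,\omega)$ so that $P_\G\lambda$ is a genuine element of $\Lt$, and $\mathcal K_\G$ must be closed. I would handle the first by taking the optional projection of $\lambda$ onto $\G$, which uses the usual conditions assumed for $\G$. Closedness follows from $L^2$ limits of $\G$-progressive processes admitting $\G$-progressive versions. Everything else is the standard nested-projection argument.
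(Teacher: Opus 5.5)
Your proof is correct and follows the route the lemma itself indicates: the Pythagorean identity for the nested projections $P_{\G}=P_{\G}P_{\G^\star}$, then $P_{\G^\star}\lambda=\lambda$ and Fubini to turn the residual norm into the integrated conditional variance. The paper gives no separate proof beyond citing this orthogonality, so your measurability bookkeeping (an optional-projection version of $\E[\lambda_t\mid\G_t]$ and closedness of $\mathcal K_{\G}$) makes explicit what the paper leaves implicit.
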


The premium is the $L^2$ norm of the projected price of risk, projected as one
object. The \emph{information-loss} term for a driver set $A$ is
$\pi(\G^\star)-\pi(\G^{Y_A})=\E\int_0^T\Var(\lambda_t\mid\G^{Y_A}_t)\,dt$, the
squared norm of the projection residual, hence nonnegative.

\begin{remark}[Benchmark invariance]\label{rem:invariance}
The choice of $\G^\star$ affects only the loss term: enlarging the benchmark leaves
the intervention-stable component $\pi^{\dop}(A)$ and the wedge---both defined
intrinsically from $Y_A$ and the causal kernels, without reference to
$\G^\star$---unchanged. The two causal terms are benchmark-free; only the
bookkeeping of unexplained premium is measured relative to the reference set.
\end{remark}

The second ingredient is the causal correction. We adopt a standard identification
setup.

\begin{assumption}[Causal identification]\label{ass:id}
For the driver set $A$ under study:
\textup{(I1)} the DAG $D$ is known and a set $U$ satisfying the back-door criterion
for $Y_A\to\lambda$ is available;
\textup{(I2)} the conditional densities entering the adjustment are strictly
positive on the support of $Y_A$ used;
\textup{(I3)} the intervention $\dop(Y_A)$ leaves the remaining structural
mechanisms intact;
\textup{(I4)} the structural relations are stable over $[0,T]$ to the extent
needed for the adjustment to be defined;
\textup{(I5)} the processes $(t,\omega)\mapsto Y_t,U_t,\lambda_t$ are jointly
$\mathcal B([0,T])\otimes\F$-measurable, $U$ is a (possibly path-valued) latent
adjustment variable measurable at each $t$, $\lambda\in L^2(\PP\otimes dt)$, and the
conditional kernels entering the adjustment are regular.
\end{assumption}

\begin{definition}[Interventional premium]\label{def:intpremium}
Under Assumption~\ref{ass:id}, the interventional projected price of risk is
\[
m^{\dop}_A(t,y):=\E[\lambda_t\mid\dop(Y_A=y)]
=\int \E[\lambda_t\mid Y_A=y,\,U=u]\,d\PP_U(u),
\]
evaluated on the observational support of $Y_A$. The interventional premium is the
squared norm of $\lambda^{\dop}_t(Y_A):=m^{\dop}_A(t,Y_A)$ taken under the
\emph{observational} marginal of $Y_A$ and $dt$,
$\pi^{\dop}(A)=\E\int_0^T m^{\dop}_A(t,Y_A)^2\,dt$, evaluated under the
\emph{observational} marginal of $Y_A$. Lemma~\ref{lem:wellposed} below establishes
that this object is well defined and finite.
\end{definition}

\begin{lemma}[Well-posedness of the interventional premium]\label{lem:wellposed}
Under Assumption~\ref{ass:id}, the back-door $g$-formula above is defined for a.e.\
$(t,\omega)$, the map $(t,\omega)\mapsto m^{\dop}_A(t,Y_A(\omega))$ is
$\mathcal B([0,T])\otimes\F$-measurable, and $m^{\dop}_A(\cdot,Y_A)\in
L^2(\PP\otimes dt)$; consequently $\pi^{\dop}(A)$ is finite.
\end{lemma}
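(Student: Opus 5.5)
The plan is to obtain the three claims (definedness, joint measurability, square-integrability) from a single regular version of the conditional expectation, built on the product space $[0,T]\times\Omega$.

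\textbf{Step 1: one jointly measurable version.} Equip $[0,T]\times\Omega$ with $\mathcal B([0,T])\otimes\F$ and the finite measure $\PP\otimes dt$, normalized by $T$ so it is a probability measure. By (I5), $(t,\omega)\mapsto(t,Y_A,U,\lambda_t)$ is measurable and $\lambda\in L^2(\PP\otimes dt)$. Conditional expectation on the product space then gives a single function
\[
h(t,y,u):=\E_{\PP\otimes dt}\bigl[\lambda\mid t,Y_A=y,U=u\bigr],
\]
which is $\mathcal B([0,T])\otimes\mathcal B(M)\otimes\mathcal B_U$-measurable. By disintegration in $t$, for a.e.\ $t$ the section $h(t,\cdot,\cdot)$ is a version of $\E[\lambda_t\mid Y_A=y,U=u]$ under the time-$t$ law. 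The kernel regularity in (I5) is what guarantees these conditional laws exist; path-valued $U$ is allowed because it lives in a standard Borel space. This choice of version is the main obstacle. Picking a version separately for each $t$ gives no joint measurability, so the product-space construction is essential.

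\textbf{Step 2: definedness and measurability of the $g$-formula.} Set
\[
m^{\dop}_A(t,y)=\int h(t,y,u)\,d\PP_U(u).
\]
By Tonelli applied to $h^{\pm}$, this is $\mathcal B([0,T])\otimes\mathcal B(M)$-measurable wherever $\int|h|\,d\PP_U<\infty$. Composing with the measurable map $(t,\omega)\mapsto(t,Y_A(\omega))$ then gives the stated measurability of $m^{\dop}_A(t,Y_A(\omega))$. The integrand $h(t,y,\cdot)$ has to be evaluated off the joint observational support, that is, at pairs $(y,u)$ with $u$ drawn from the marginal $\PP_U$ rather than from $\PP_{U\mid Y_A=y}$. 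The positivity condition (I2) is what makes this legitimate. Under it, $\PP_{Y_A}\otimes\PP_U\ll\PP_{(Y_A,U)}$, with density
\[
w(y,u)=\frac{p(y)}{p(y\mid u)},
\]
so $h$ is pinned down $\PP_{Y_A}\otimes\PP_U$-a.e. Hence $m^{\dop}_A$ is defined, independently of the version chosen, for a.e.\ $(t,\omega)$.

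\textbf{Step 3: the $L^2$ bound.} By Jensen applied to the $\PP_U$-integral,
\[
m^{\dop}_A(t,y)^2\le\int h(t,y,u)^2\,d\PP_U(u).
\]
Integrating against $\PP_{Y_A}\otimes dt$ and changing measure with $w$ gives
\[
\pi^{\dop}(A)\le\E\int_0^T w(Y_A,U)\,h(t,Y_A,U)^2\,dt.
\]
Conditional Jensen also gives $h^2\le\E[\lambda_t^2\mid Y_A,U]$, so finiteness follows from $\lambda\in L^2$ once $w$ is bounded. The ratio $w$ need not be bounded from strict positivity alone, and this is the second delicate point. I would read (I2) as positivity uniformly bounded below on the support used; this is the standard overlap condition. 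If only pointwise positivity is available, I would instead define $m^{\dop}_A$ by truncation to sets where $w\le n$, and state finiteness under the uniform overlap reading. Finiteness of $\pi^{\dop}(A)$ then follows immediately. (I3) and (I4) are used only to identify $m^{\dop}_A$ with the interventional quantity, not in the analytic part of the argument.
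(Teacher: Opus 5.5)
Your proposal is correct under the overlap hypothesis you add. It departs from the paper exactly where the paper's proof breaks down.

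The paper obtains measurability by asserting that (I5) and ``regularity of the conditional kernels'' make $\E[\lambda_t\mid Y_A=y,U=u]$ jointly measurable. It obtains integrability in one line: ``by conditional Jensen, $\|m^{\dop}_A(\cdot,Y_A)\|\le\|\lambda\|$.'' That inequality is false in general. The $g$-formula integrates $h(t,y,\cdot)$ against $\PP_U$, not against $\PP_{U\mid Y_A=y}$. Hence $m^{\dop}_A(t,Y_A)=\E[\lambda_t\,w(Y_A,U)\mid Y_A]$ with your ratio $w=p(y)/p(y\mid u)$, and conditional Jensen only gives $\|m^{\dop}_A(\cdot,Y_A)\|\le\|w\lambda\|$.

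The paper's own Example~\ref{ex:gauss} violates the claimed bound. Take $a=1$, $b=0.5$, $c=-0.8$, $\sigma_Y^2=1$, $d=0$. Then $\lambda=-0.3U+0.5\varepsilon_Y$, so $\|\lambda\|^2=0.34$, while $\pi^{\dop}(\{Y\})=b^2v_Y=0.5$.

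Your route is the argument that actually works: Jensen in $u$ alone, then the change of measure from $\PP_{Y_A}\otimes\PP_U$ to the joint law, then conditional Jensen under the joint law. Your product-space construction in Step~1 also supplies what the paper only asserts. It gives a jointly measurable version, and version-independence via $\PP_{Y_A}\otimes\PP_U\ll\PP_{(Y_A,U)}$.

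Two refinements are needed.

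\textbf{Overlap is an added hypothesis.} Bounded $w$ is not a reading of (I2), and you should state it as an explicit assumption, because under pointwise positivity the lemma is false. Take $Y,U$ uniform on $(0,1)$ with joint density $2y$ on $\{u<\tfrac12\}$ and $2(1-y)$ on $\{u>\tfrac12\}$, which is strictly positive. Set $\lambda=Y^{-1/2}\mathbf 1\{U<\tfrac12\}$. Then $\E\lambda^2=1$, but $m^{\dop}(y)=\tfrac12 y^{-1/2}$, so $\E[m^{\dop}(Y)^2]=\infty$.

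For the same reason, your Step~2 claim that $m^{\dop}_A$ is defined a.e.\ comes too early. It needs $\int|h|\,d\PP_U<\infty$ for a.e.\ $(t,y)$. Under your overlap bound this follows from $\E\int_0^T w|h|\,dt<\infty$, but it can fail under mere positivity. Your truncation fallback must therefore handle definedness, not only finiteness.

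\textbf{Uniform overlap is too strong for the running example.} It excludes the paper's linear--Gaussian model, where $w$ is unbounded yet $\pi^{\dop}$ is finite. What your argument really uses is $\E\int_0^T w\,h^2\,dt<\infty$, i.e.\ $h\in L^2(\PP_{Y_A}\otimes\PP_U\otimes dt)$. A H\"older condition guarantees this: $w\in L^q$ and $\lambda\in L^{2p}$ with $1/p+1/q=1$. That condition does cover the Gaussian case.
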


\begin{proof}
Joint measurability of $(t,\omega)\mapsto Y_t,U_t,\lambda_t$ (I5) and regularity of
the conditional kernels make $\E[\lambda_t\mid Y_A=y,U=u]$ jointly measurable, so
the integral over $\PP_U$ defines $m^{\dop}_A(t,y)$ for a.e.\ $(t,\omega)$ and the
composition with $Y_A$ is measurable. By conditional Jensen,
$\|m^{\dop}_A(\cdot,Y_A)\|_{L^2(\PP\otimes dt)}\le\|\lambda\|_{L^2(\PP\otimes dt)}
<\infty$ since $\lambda\in L^2(\PP\otimes dt)$ by (I5); Fubini applies, and
$\pi^{\dop}(A)=\|m^{\dop}_A(\cdot,Y_A)\|_{L^2(\PP\otimes dt)}^2$ is finite.
\end{proof}

\begin{assumption}[Faithfulness / no cancellation]\label{ass:faith}
Fix a structural causal model in which, for each $t$, the conditional means
$\E[\lambda_t\mid Y_A]$ and $m^{\dop}_A(t,Y_A)$ are real-analytic functions of a
finite parameter vector $\vartheta\in\Theta\subseteq\R^d$ (the structural
coefficients), the open set $\Theta$ carrying Lebesgue measure; the linear--Gaussian
family of Example~\ref{ex:gauss} is the running instance. Open back-door paths from
$Y_A$ to $\lambda$ then induce a nonzero observational--interventional mean
distortion $\delta_t:=\E[\lambda_t\mid Y_A]-m^{\dop}_A(t,Y_A)$ for all $\vartheta$
outside a Lebesgue-null (real-analytic) subvariety of $\Theta$. This uses the
standard fact that the zero set of a real-analytic function that is not identically
zero has Lebesgue measure zero \cite{mityagin}: $\delta_t$ is real-analytic in
$\vartheta$ and not identically zero precisely when a back-door path is open, so its
zero set is null. Equivalently,
whenever the distortion is nonzero the wedge is nonzero; the exact cancellation
$\|\E[\lambda\mid Y_A]\|=\|m^{\dop}_A\|$ with $\delta\neq0$ defines a lower-dimensional
algebraic set, hence is nongeneric, as the Gaussian example below illustrates.
\end{assumption}

The interventional premium $\pi^{\dop}(A)$ is the value realized by a strategy that responds to the intervention-stable driver surface rather than the observational
one. The gap between the two is the confounding wedge.

\begin{theorem}[Wedge identity]\label{thm:gap}
Under Assumption~\ref{ass:id} alone, for a driver set $A$,
\[
\pi(\G^{Y_A})-\pi^{\dop}(A)
=\E\!\int_0^T\Big(\E[\lambda_t\mid Y_A]^2-m^{\dop}_A(t,Y_A)^2\Big)\,dt .
\]
If the observational conditional mean equals the back-door-adjusted interventional
mean, $\E[\lambda_t\mid Y_A]=m^{\dop}_A(t,Y_A)$ for a.e.\ $t$, the wedge is zero;
graphical blocking of all open back-door paths from $Y_A$ to $\lambda$ (so that the
adjustment set $U$ satisfies the back-door criterion) is a sufficient condition.
The wedge can take either sign, unlike the nonnegative information loss. This
identity requires no faithfulness or analyticity hypothesis.
\end{theorem}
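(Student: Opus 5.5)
The identity is essentially definitional once both sides are shown to be well-defined finite integrals. The plan has three steps: establish finiteness, expand the definitions, and then handle the zero-wedge and sign claims by direct inspection and a small example.

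\textbf{Finiteness.} I would first note that $\pi(\G^{Y_A})=\E\int_0^T\E[\lambda_t\mid Y_A]^2\,dt$. Here $\G^{Y_A}_t$ is the driver filtration, and the conditional mean is read as the function of $Y_A$ at time $t$, consistent with the notation of the statement. This quantity is finite by conditional Jensen, since $\lambda\in\Lt$. Lemma~\ref{lem:wellposed} gives that $m^{\dop}_A(\cdot,Y_A)$ is jointly measurable and lies in $\Lt$. Hence both squared integrands are integrable with respect to $\PP\otimes dt$.

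\textbf{The identity.} Fubini (justified by this integrability) and linearity of the integral let me subtract the two integrals termwise. By Definition~\ref{def:intpremium}, $\pi^{\dop}(A)$ is taken under the same observational marginal of $Y_A$ and $dt$. So the difference is exactly $\E\int_0^T(\E[\lambda_t\mid Y_A]^2-m^{\dop}_A(t,Y_A)^2)\,dt$.

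This uses only (I1)--(I5). Those are needed solely to make $m^{\dop}_A$ well defined via the back-door $g$-formula, which is the content of Lemma~\ref{lem:wellposed}. No analyticity or faithfulness enters, which matches the last sentence of the statement.

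\textbf{Zero wedge.} If $\E[\lambda_t\mid Y_A]=m^{\dop}_A(t,Y_A)$ holds $\PP$-a.s. for a.e.\ $t$, the integrand vanishes $\PP\otimes dt$-a.e., so the wedge is $0$.

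For the graphical sufficient condition, suppose no back-door path from $Y_A$ to $\lambda$ is open, so $\lambda_t$ is independent of $U$ given $Y_A$ in the relevant sense. Then $\E[\lambda_t\mid Y_A=y,U=u]=\E[\lambda_t\mid Y_A=y]$, and integrating over $\PP_U$ returns the observational mean.

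I expect this to be the one delicate point. The $g$-formula integrates against the marginal $\PP_U$ rather than the conditional $\PP_{U\mid Y_A}$, so the reduction needs exactly this conditional independence. It also needs the positivity (I2) for $y$ on the observational support. I would state it on the support used, with kernels chosen as regular versions per (I5).

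\textbf{Either sign.} I would exhibit a linear--Gaussian instance of the type in Example~\ref{ex:gauss}. Take $\lambda=\beta Y+\gamma U$ with $Y=\alpha U+\varepsilon$, all variables standardized and time-constant. Then $m^{\dop}(y)=\beta y$, while $\E[\lambda\mid Y]=(\beta+\gamma\alpha/(1+\alpha^2))\,y$ for unit variances. The wedge is therefore proportional to $(\beta+c)^2-\beta^2$ with $c=\gamma\alpha/(1+\alpha^2)$. This is positive when $c$ has the same sign as $\beta$, and negative when $-2\beta<c<0$ (for $\beta>0$).

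This contrasts with the nonnegative loss of Lemma~\ref{thm:projection}. The wedge is not a projection residual, because $m^{\dop}_A(\cdot,Y_A)$ is not the orthogonal projection of $\E[\lambda\mid Y_A]$ onto any subspace.
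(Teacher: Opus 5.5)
Your proof of the identity and of the zero-wedge clause is the paper's argument: both sides are finite squared $L^2$ norms and their difference is the stated integral. Your finiteness check via conditional Jensen and Lemma~\ref{lem:wellposed} makes explicit what the paper leaves implicit. Your sign example is Example~\ref{ex:gauss} in other notation.

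\textbf{The gap: the graphical sufficient condition.} You infer from ``no back-door path from $Y_A$ to $\lambda$ is open'' that $\lambda_t$ is independent of $U$ given $Y_A$, so that $\E[\lambda_t\mid Y_A=y,U=u]=\E[\lambda_t\mid Y_A=y]$. That implication is false. Delete the edge $U\to Y$ in Example~\ref{ex:gauss}, i.e.\ take $a=0$. Then $Y$ has no parents, so there is no back-door path and $U$ satisfies the back-door criterion vacuously. Yet $\E[\lambda\mid Y,U]=bY+cU$ depends on $U$ whenever $c\neq0$. The wedge is still zero there, but because $U\perp Y$ gives $\PP_{U\mid Y}=\PP_U$, not because of conditional independence.

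Adding ``or marginal independence'' as a second case does not rescue the argument. Take $U=(U_1,U_2)$ with $U_1\to Y_A$ only and $U_2\to\lambda$ only. Neither $U\perp Y_A$ nor $\lambda\perp U\mid Y_A$ holds, yet the wedge vanishes. So the reduction does not need ``exactly'' $\lambda_t\perp U\mid Y_A$. It needs $\int\E[\lambda_t\mid Y_A=y,U=u]\,d\big(\PP_U-\PP_{U\mid Y_A=y}\big)(u)=0$, which your graphical hypothesis does not deliver through any single independence statement.

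\textbf{The repair.} Use the identification argument the paper invokes by citing the back-door criterion.
By (I1), the $g$-formula with $U$ equals $\E[\lambda_t\mid\dop(Y_A=y)]$.
If no back-door path is open, the empty set is also a valid back-door set: it contains no descendant of $Y_A$, and every back-door path is already blocked. Hence the same interventional mean equals $\E[\lambda_t\mid Y_A=y]$ on the support where (I2) applies.
Two valid adjustment formulas identify one quantity, so they coincide, and the zero-wedge clause finishes the argument.
Your reading of the hypothesis as ``no back-door path open without adjustment'' is the right one. $U$ merely being a valid back-door set cannot suffice, as Example~\ref{ex:gauss} with $c\neq0$ shows.

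\textbf{A smaller point.} Drop your closing claim that $m^{\dop}_A(\cdot,Y_A)$ is not the orthogonal projection of $\E[\lambda\mid Y_A]$ onto any subspace. It fails when there is no confounding, since then the two coincide. It also fails for $\beta=0$ in your example, where $m^{\dop}=0$ is the projection onto $\{0\}$ and the wedge really is a residual. What the statement needs, and your example supplies, is only that the wedge can be negative, which no squared projection residual can be.
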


\begin{proof}
Both terms are squared $L^2$ norms; their difference is the stated integral by
expanding each norm. The zero-wedge sufficiency is the back-door criterion
\cite{pearl,robins}; the sign is unconstrained because the integrand is a difference
of squares.
\end{proof}

\begin{theorem}[Generic nonzero wedge]\label{thm:genericwedge}
Under Assumptions~\ref{ass:id} and \ref{ass:faith}, if a back-door path from $Y_A$
to $\lambda$ is open (the adjustment $U$ omitted), then the wedge of
Theorem~\ref{thm:gap} is nonzero for all structural parameters outside a
Lebesgue-null real-analytic subvariety of $\Theta$; that is, the observational
attribution is generically distorted away from the intervention-stable value.
\end{theorem}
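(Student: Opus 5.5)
The plan is to reduce the claim to the standard fact about zero sets of real-analytic functions that Assumption~\ref{ass:faith} already invokes. The wedge of Theorem~\ref{thm:gap},
\[
w(\vartheta):=\E\!\int_0^T\Big(\E[\lambda_t\mid Y_A]^2-m^{\dop}_A(t,Y_A)^2\Big)\,dt,
\]
is then a real-analytic function of $\vartheta\in\Theta$. It suffices to show that $w$ is real-analytic and not identically zero on (each connected component of) $\Theta$ whenever a back-door path is open. Its zero set is then a Lebesgue-null analytic subvariety \cite{mityagin}.

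\textbf{Step 1: analyticity.} By Assumption~\ref{ass:faith}, for each $t$ the maps $\vartheta\mapsto\E[\lambda_t\mid Y_A]$ and $\vartheta\mapsto m^{\dop}_A(t,Y_A)$ are real-analytic, and hence so is their difference of squares. The outer integral over $\PP\otimes dt$ also depends on $\vartheta$ through the observational law of $Y_A$. I will handle this in the parametric family of Example~\ref{ex:gauss}, where all quantities are explicit. There the conditional means are linear in $Y_A$ with coefficients rational in $\vartheta$ and denominators nonvanishing on $\Theta$, and the second moments of $Y_A$ are polynomial in $\vartheta$. So $w(\vartheta)$ is a finite sum of such terms integrated over a fixed compact $[0,T]$, and is real-analytic. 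In the general analytic SCM I would differentiate under the integral sign, using the $L^2$ domination supplied by Lemma~\ref{lem:wellposed} (conditional Jensen bounds both terms by $\|\lambda\|^2$). This yields analyticity provided the dependence is locally uniformly bounded in complexified $\vartheta$. I will state that as part of the analytic-model hypothesis.

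\textbf{Step 2: non-identical vanishing.} This is the main obstacle. Assumption~\ref{ass:faith} gives $\delta_t\not\equiv0$ when a back-door path is open. However, $w$ is a difference of squared norms, and $\delta\neq0$ does not by itself force $w\neq0$: exact cancellation $\|\E[\lambda\mid Y_A]\|=\|m^{\dop}_A\|$ is possible at isolated parameters. So it suffices to exhibit a single $\vartheta_0$ in each component of $\Theta$ with $w(\vartheta_0)\neq0$.

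To find $\vartheta_0$, I will start from a parameter on the blocked locus and scale the confounding coefficient. Take the Gaussian case with $Y_A\leftarrow U\to\lambda$ and a direct effect $\beta$. There $\E[\lambda\mid Y_A]=(\beta+\gamma\alpha/\Var Y_A)Y_A$ and $m^{\dop}=\beta Y_A$, so
\[
w=\Var(Y_A)\big[(\beta+c)^2-\beta^2\big]=\Var(Y_A)\,c\,(2\beta+c),
\]
with $c$ the back-door contribution. This is a nonzero polynomial in $(\beta,c)$ whenever the path is open, i.e.\ whenever the confounding coefficients are free parameters. It vanishes only on $\{c=0\}\cup\{c=-2\beta\}$, which is exactly the nongeneric cancellation set mentioned after Assumption~\ref{ass:faith}.

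For a general analytic model, I will embed this sub-structure: restrict $\vartheta$ to the open back-door path and set the other confounding coefficients to zero. The resulting slice reproduces the above form, so $w\not\equiv0$ there, and by analytic continuation $w\not\equiv0$ on the component.

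\textbf{Step 3: conclusion.} Combining Steps 1 and 2, $\{w=0\}$ is the zero set of a nontrivial real-analytic function on the open set $\Theta$, hence Lebesgue-null \cite{mityagin}. Off this set the wedge of Theorem~\ref{thm:gap} is nonzero, which is the claim.
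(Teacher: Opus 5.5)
You are right that $\delta\not\equiv0$ does not by itself give a nonzero wedge. On this point you are more careful than the paper. Its proof asserts that the wedge vanishes \emph{iff} $\delta=0$ a.e.\ and then applies \cite{mityagin} to $\delta$, but only the ``if'' direction holds (Example~\ref{ex:gauss} at $c=-2bv_Y/a$). What actually closes the paper's argument is the final clause of Assumption~\ref{ass:faith}, which stipulates that exact cancellation $\|\E[\lambda\mid Y_A]\|=\|m^{\dop}_A\|$ with $\delta\neq0$ is a lower-dimensional, hence null, set. Your formula $w=\Var(Y_A)\,c\,(2\beta+c)$ is correct. With freely varying coefficients and the identity theorem, it does settle the linear--Gaussian family.

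The gap is Step~2 beyond that family. That $w\not\equiv0$ does not follow from an open back-door path plus analyticity; it depends on how $\Theta$ parametrizes the model. For example, take the Gaussian model with the $U\to\lambda$ coefficient tied to the direct effect, $\gamma=-2\beta\Var(Y_A)/\alpha$. Then the path is open and $\delta=-2\beta Y_A\not\equiv0$ on an open set of the remaining parameters, yet $w\equiv0$ there.

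Your embedding step presupposes exactly what excludes this: that $w$ extends analytically to a connected domain containing both $\Theta$ and a point where the model reduces to the scalar linear--Gaussian slice, with a back-door coefficient varying independently of $\beta$. For a nonlinear or non-Gaussian analytic SCM, setting the other confounding coefficients to zero does not produce the form $\Var(Y_A)\,c\,(2\beta+c)$ at all, so nothing is shown there.

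The fix is to use the cancellation clause rather than try to derive it. The set $\{w=0\}$ is contained in the set where $\delta$ vanishes (null by Assumption~\ref{ass:faith} and \cite{mityagin}) together with the exact-cancellation set (null by the same assumption). This yields the theorem directly, and it also makes the complexified-boundedness hypothesis you add in Step~1 unnecessary. If you prefer your route, you must add $w\not\equiv0$ on each component of $\Theta$ as an explicit hypothesis (e.g.\ via a freely varying back-door coefficient). As written, Step~2 establishes it only in the linear--Gaussian case.
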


\begin{proof}
By Theorem~\ref{thm:gap} the wedge vanishes iff $\E[\lambda_t\mid Y_A]=
m^{\dop}_A(t,Y_A)$ a.e. Under Assumption~\ref{ass:faith} the distortion
$\delta_t=\E[\lambda_t\mid Y_A]-m^{\dop}_A(t,Y_A)$ is real-analytic in the
structural parameter $\vartheta$ and not identically zero when a path is open, so by
\cite{mityagin} its zero set is Lebesgue-null; off that set the wedge is nonzero.
\end{proof}

These assemble into the three-way split.

\begin{theorem}[Anatomy of realizable premium]\label{thm:anatomy}
Under Assumption~\ref{ass:id}, fix a portfolio with $\lambda$ measurable w.r.t.
$\G^\star$ and an admissible driver set $A$ with $\G^{Y_A}\subseteq\G^\star$. Then
\[
\underbrace{\pi(\G^\star)}_{\text{attainable}}
=\underbrace{\pi^{\dop}(A)}_{\text{captured}}
+\underbrace{\big[\pi(\G^{Y_A})-\pi^{\dop}(A)\big]}_{\text{confounding wedge (signed)}}
+\underbrace{\big[\pi(\G^\star)-\pi(\G^{Y_A})\big]}_{\text{information loss }(\ge 0)} .
\]
\end{theorem}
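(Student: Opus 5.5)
The identity in Theorem~\ref{thm:anatomy} is a telescoping sum of three finite real numbers. The plan is therefore to prove finiteness of each term and then cancel. The only substantive work is to confirm that each bracket is a well-defined finite quantity carrying the interpretation attached to it.

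\textbf{Step 1: finiteness.} Each of the three quantities is a squared norm in $\Lt$.
\begin{itemize}
\item $\pi(\G^\star)=\|\lambda\|^2<\infty$. Here $\lambda$ is $\G^\star$-measurable by hypothesis and $\lambda\in\Lt$ by (I5).
\item $\pi(\G^{Y_A})=\|P_{\G^{Y_A}}\lambda\|^2\le\|\lambda\|^2$. This holds because conditional expectation is a contraction.
\item $\pi^{\dop}(A)<\infty$. This is exactly Lemma~\ref{lem:wellposed}, which uses only Assumption~\ref{ass:id}.
\end{itemize}
Hence all differences below are differences of finite reals, with no $\infty-\infty$ issue.

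\textbf{Step 2: telescoping.} Adding and subtracting $\pi^{\dop}(A)$ and $\pi(\G^{Y_A})$ gives
\[
\pi(\G^\star)=\pi^{\dop}(A)+\big[\pi(\G^{Y_A})-\pi^{\dop}(A)\big]+\big[\pi(\G^\star)-\pi(\G^{Y_A})\big].
\]
This is an algebraic identity once Step 1 holds.

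\textbf{Step 3: the labels.} Two inputs justify the names attached to the brackets.
\begin{itemize}
\item The middle bracket is the wedge of Theorem~\ref{thm:gap}. It equals $\E\int_0^T(\E[\lambda_t\mid Y_A]^2-m^{\dop}_A(t,Y_A)^2)\,dt$ and is signed. Theorem~\ref{thm:gap} requires only Assumption~\ref{ass:id}, which matches the hypothesis here.
\item For the last bracket, apply Lemma~\ref{thm:projection} with $\G=\G^{Y_A}\subseteq\G^\star$. Nested projections give $\pi(\G^\star)-\pi(\G^{Y_A})=\|\lambda-P_{\G^{Y_A}}\lambda\|^2=\E\int_0^T\Var(\lambda_t\mid\G^{Y_A}_t)\,dt\ge0$.
\end{itemize}

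\textbf{Main obstacle: the admissibility hypothesis.} The one point needing care is that the projection lemma is applied to a legitimate subfiltration. The hypothesis $\G^{Y_A}\subseteq\G^\star$ is given. Admissibility, meaning $\F\hra\G^{Y_A}$, ensures that $P_{\G^{Y_A}}\lambda$ is the pricing-relevant projection of Proposition~\ref{prop:investor}, so that "captured" and "loss" carry their economic meaning. The identity itself does not depend on admissibility.

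I would also check one consistency point: $\E[\lambda_t\mid Y_A]$ in Theorem~\ref{thm:gap} should coincide with $\E[\lambda_t\mid\G^{Y_A}_t]$ as used in $\pi(\G^{Y_A})$. I would adopt this convention, with $\G^{Y_A}_t$ generated by $Y_A$ up to $t$, so that both expressions denote the same projection.
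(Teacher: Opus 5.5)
Your proposal is correct and matches the paper's proof: the identity is the add-and-subtract telescoping. The last bracket is the nonnegative loss by Lemma~\ref{thm:projection}, using $\G^{Y_A}\subseteq\G^\star$, and the middle bracket is the signed wedge by Theorem~\ref{thm:gap}. Your explicit finiteness check, which like the paper takes $\pi^{\dop}(A)<\infty$ from Lemma~\ref{lem:wellposed}, and your remark that admissibility plays no role in the identity itself are details the paper leaves implicit.
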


\begin{proof}
Add and subtract $\pi(\G^{Y_A})$; the loss is Lemma~\ref{thm:projection}, the
wedge is Theorem~\ref{thm:gap}.
\end{proof}

The three terms answer three distinct questions. The captured term asks how much
premium is intervention-stable; the wedge---a \emph{causal distortion of the
conditional Sharpe surface}---asks how much of the observed conditional premium is
\emph{not} intervention-stable; and the loss asks how much premium is forgone
because the driver set is informationally incomplete relative to $\G^\star$. The
taxonomy is summarized in Table~\ref{tab:taxonomy}.

\begin{table}[ht]
\centering
\caption{Three-way anatomy of the squared price-of-risk premium.}
\label{tab:taxonomy}
\begin{tabular}{lll}
\hline
Term & Object & Sign / interpretation\\
\hline
$\pi^{\dop}(A)$ & norm of interventional surface & $\ge0$: stable causal contribution\\
$\pi(\G^{Y_A})-\pi^{\dop}(A)$ & difference of squared norms & signed: observational (causal) distortion\\
$\pi(\G^\star)-\pi(\G^{Y_A})$ & projection residual & $\ge0$: missing admissible information\\
\hline
\end{tabular}
\end{table}

\subsection{Causal distortion versus information loss}\label{sec:distortion}
The geometric structure of the three terms is central to the single-portfolio
decomposition: the loss is a Pythagorean projection residual, the wedge is not.

\begin{proposition}[Residual versus wedge]\label{prop:geometry}
Let $P_A$ be the $L^2$ projection onto $\G^{Y_A}$-measurable processes. The
information loss is a squared projection residual,
$\pi(\G^\star)-\pi(\G^{Y_A})=\|\lambda-P_A\lambda\|^2$ with
$\langle P_A\lambda,\lambda-P_A\lambda\rangle=0$, hence nonnegative. The confounding
wedge is not of this form: it is a signed inner product,
\[
\pi(\G^{Y_A})-\pi^{\dop}(A)
=\big\langle P_A\lambda-\lambda^{\dop}(Y_A),\,P_A\lambda+\lambda^{\dop}(Y_A)\big\rangle,
\]
the inner product of the sum and difference of two $\G^{Y_A}$-measurable vectors,
neither the orthogonal projection of the other. It therefore carries the sign of
that inner product and is not a squared norm; in the Gaussian Example~\ref{ex:gauss}
it takes the values $+1.12$, $-0.48$, and $0$ as the confounding coefficient
varies, with fixed nonnegative loss---a sign change no projection residual can
exhibit. Consequently the decomposition of Theorem~\ref{thm:anatomy} is additive
but not Pythagorean.
\end{proposition}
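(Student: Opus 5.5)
The argument splits into three parts: the Pythagorean identity for the loss, the difference-of-squares identity for the wedge, and a sign computation in the Gaussian example.

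\emph{Loss term.} Since $\lambda$ is $\G^\star$-measurable, $\E[\lambda\mid\G^\star]=\lambda$. Because $\G^{Y_A}\subseteq\G^\star$, Lemma~\ref{thm:projection} gives
\[
\pi(\G^\star)-\pi(\G^{Y_A})=\|\lambda-P_A\lambda\|^2 .
\]
I will also check orthogonality directly. For any $\G^{Y_A}$-measurable $Z\in\Lt$, the tower property applied at each $t$ together with Fubini gives $\E\int_0^T Z_t(\lambda_t-\E[\lambda_t\mid\G^{Y_A}_t])\,dt=0$. Taking $Z=P_A\lambda$ yields $\langle P_A\lambda,\lambda-P_A\lambda\rangle=0$. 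Expanding $\|\lambda\|^2=\|P_A\lambda+(\lambda-P_A\lambda)\|^2$ then recovers the Pythagorean identity, and nonnegativity is immediate.

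\emph{Wedge term.} I will use the identification that Theorem~\ref{thm:gap} implicitly uses: $\pi(\G^{Y_A})=\|P_A\lambda\|^2$, with $P_A\lambda_t=\E[\lambda_t\mid Y_A]$, and $\pi^{\dop}(A)=\|\lambda^{\dop}(Y_A)\|^2$. The latter is finite by Lemma~\ref{lem:wellposed}. The polarization identity $\|a\|^2-\|b\|^2=\langle a-b,a+b\rangle$ holds in any real inner-product space because the cross terms $\langle a,b\rangle$ and $\langle b,a\rangle$ cancel by symmetry. Applying it with $a=P_A\lambda$ and $b=\lambda^{\dop}(Y_A)$ gives the displayed formula. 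Both vectors are $\G^{Y_A}$-measurable, so neither is the residual of projecting $\lambda$. Nothing forces $b=P_A b'$ for an orthogonal decomposition of $a$: $\lambda^{\dop}(Y_A)$ is obtained by integrating against $\PP_U$ rather than against the conditional law of $U$ given $Y_A$. Consequently the inner product has no sign constraint.

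\emph{Sign change.} To show the sign genuinely changes, I will not rely on abstract considerations. Instead I will compute in Example~\ref{ex:gauss}. In a linear--Gaussian SCM with $\lambda_t=\beta Y+\gamma U+\varepsilon$ and $Y=\alpha U+\eta$, the interventional slope is $\beta$ and the observational slope is
\[
\beta+\gamma\alpha\,\Var(U)/\Var(Y).
\]
The wedge is therefore $T\,\Var(Y)\big[(\beta+c)^2-\beta^2\big]$ with $c=\gamma\alpha\Var(U)/\Var(Y)$. Its sign is the sign of $c(2\beta+c)$, which is positive, negative, or zero as the confounding coefficient varies. Meanwhile the loss $\|\lambda-P_A\lambda\|^2$ remains nonnegative throughout. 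Plugging in the example's parameter values should reproduce $+1.12$, $-0.48$, and $0$. A projection residual is a squared norm and cannot be negative, so the wedge is not of that form. Combining with Theorem~\ref{thm:anatomy}, the decomposition is additive but not Pythagorean: the cross term $\langle\lambda^{\dop},P_A\lambda-\lambda^{\dop}\rangle$ is in general nonzero.

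\emph{Main obstacle.} The algebra is routine. The hard step is the bookkeeping that identifies $\pi(\G^{Y_A})$ with the norm of $\E[\lambda_t\mid Y_A]$ and not with projection onto the full enlarged filtration $\F\vee\sigma(Y_A)$. This needs a convention that $\G^{Y_A}$ denotes driver-generated information at each $t$, and it has to match the numerical values in the Gaussian example. I would fix this convention first and then verify the three example values by direct substitution.
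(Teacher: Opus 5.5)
Your proposal is correct and follows the paper's own proof: the projection theorem (with orthogonality) for the loss, the polarization identity $\|a\|^2-\|b\|^2=\langle a-b,a+b\rangle$ for the wedge, and the Gaussian example, where the wedge $(\kappa^2-b^2)v_Y$ has the sign of $(\kappa-b)(\kappa+b)$ and is negative at $c=-0.8$, which rules out a squared-norm form. The only correction is to set $T=1$, since the example suppresses $t$, so that substitution returns $1.12$, $-0.48$, $0$. Your argument rests only on the wedge taking a negative value, not on the loss being held fixed, and is therefore slightly more robust than the paper's: the loss $c^2(1-a^2/v_Y)+d^2$ does change at $c=-2bv_Y/a$.
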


\begin{proof}
The loss identity is the projection theorem, giving orthogonality and
nonnegativity. For the wedge, both $P_A\lambda=\E[\lambda\mid\G^{Y_A}]$ and the
back-door-adjusted mean $\lambda^{\dop}(Y_A)=\int\E[\lambda\mid Y_A,U=u]\,d\PP_U(u)$
are $\G^{Y_A}$-measurable, and
$\pi(\G^{Y_A})-\pi^{\dop}(A)=\|P_A\lambda\|^2-\|\lambda^{\dop}(Y_A)\|^2
=\langle P_A\lambda-\lambda^{\dop}(Y_A),P_A\lambda+\lambda^{\dop}(Y_A)\rangle$
by the polarization of a difference of squared norms. When every back-door path is
blocked, $\E[\lambda\mid Y_A,U]$ does not depend on $U$, so
$\lambda^{\dop}(Y_A)=P_A\lambda$ and the wedge vanishes. When a back-door path is
open the two differ, and the Gaussian Example~\ref{ex:gauss} computes the wedge
explicitly as $(\kappa^2-b^2)v_Y$, whose sign is that of $(\kappa-b)(\kappa+b)$ and
which is positive, negative, or zero as the confounding coefficient $c$ varies while
the loss stays fixed and positive. A signed quantity that changes sign under a
parameter holding the loss fixed cannot be a squared projection residual, which
proves the wedge is not of that form.
\end{proof}

\begin{example}[Gaussian]\label{ex:gauss}
Suppress $t$. Let the chosen driver be scalar $Y$, with confounder $U$ and omitted
driver $Z$:
\[
U\sim N(0,1),\ Z\sim N(0,1),\ Y=aU+\varepsilon_Y,\ \lambda=bY+cU+dZ,
\]
$\varepsilon_Y\sim N(0,\sigma_Y^2)$, all independent; $A=\{Y\}$ with adjustment $U$.
Since $\E[U\mid Y]=\tfrac{a}{a^2+\sigma_Y^2}Y$,
$\E[\lambda\mid Y]=\big(b+\tfrac{ca}{a^2+\sigma_Y^2}\big)Y=:\kappa Y$, while
severing $U\to Y$ gives $m^{\dop}_{\{Y\}}(y)=by$. With $v_Y=a^2+\sigma_Y^2$,
\[
\pi(\G^Y)=\kappa^2 v_Y,\quad \pi^{\dop}(\{Y\})=b^2 v_Y,\quad
\text{wedge}=(\kappa^2-b^2)v_Y,
\]
and the loss from omitting $Z$ is
$\E[\Var(\lambda\mid Y)]=c^2(1-a^2/v_Y)+d^2$. The wedge sign is that of
$(\kappa-b)(\kappa+b)$ with $\kappa-b=ca/v_Y$:
$a{=}1,b{=}0.5,c{=}0.8,\sigma_Y^2{=}1$ gives wedge $+1.12$;
$c{=}-0.8$ gives $-0.48$; and $c=-2b\,v_Y/a$ gives $\kappa=-b$, wedge $0$
\emph{despite an open back-door path}, which is why
Assumption~\ref{ass:faith} is needed. (Monte Carlo with $8\times10^6$ draws
reproduces $1.120,-0.480,0.000$.) In a multi-driver panel the same calculation
holds with scalars replaced by covariance matrices: the wedge is the difference of
two quadratic forms in the observational and back-door-adjusted conditional-mean
coefficients.
\end{example}

\section{Aggregation: bilinear corrections under joint admissibility}\label{sec:agg}

A book pools the drivers of many portfolios. Write $\lambda^B=\sum_j\alpha_j\lambda_j$
for the book's price-of-risk process, the $\alpha_j$ being attribution weights, and
$P_\cup$ for the $L^2$ projection onto the pooled filtration $\G^\cup=\bigvee_j\G^j$.
Because premia are squared norms, pooling generates pairwise bilinear cross terms,
so book wedges and losses do not add linearly across desks.

\begin{proposition}[Book-level decomposition]\label{prop:book}
The aggregation below is an attribution aggregation in the Hilbert space of
portfolio-level price-of-risk processes, with $\lambda^B=\sum_j\alpha_j\lambda_j$
the book attribution process of the preceding paragraph. For a jointly admissible
book with union residuals $r_j:=\lambda_j-P_\cup\lambda_j$, the information loss is
\[
\ell_B=\|\lambda^B-P_\cup\lambda^B\|^2
=\sum_j\alpha_j^2\|r_j\|^2+2\sum_{j<k}\alpha_j\alpha_k\langle r_j,r_k\rangle,
\]
so losses add only under residual orthogonality $\langle r_j,r_k\rangle=0$; the
residual cross-covariance is a distinct aggregation correction. The book wedge is
\[
w_B=\sum_j\alpha_j^2 w_j^{\cup}
+2\sum_{j<k}\alpha_j\alpha_k\big(\langle P_\cup\lambda_j,\delta_k\rangle
+\langle P_\cup\lambda_k,\delta_j\rangle-\langle\delta_j,\delta_k\rangle\big),
\]
with no higher-order terms, since a squared norm generates only pairwise bilinear
terms. Both cross terms can be read off the network: they are generically nonzero
when the shared driver lies on a back-door path to more than one incident price of
risk, and may still vanish by orthogonality or exact cancellation in nongeneric
configurations.
\end{proposition}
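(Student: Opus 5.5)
The plan is to work entirely in the Hilbert space $\Lt$ and use only two facts: linearity of the relevant operators, and expansion of a squared norm of a finite sum. First I fix the notation implicit in the statement. Let $p_j:=P_\cup\lambda_j$ be the pooled observational projection. Let $\lambda_j^{\dop}$ be the back-door-adjusted surface of Definition~\ref{def:intpremium}, computed with respect to the pooled driver set and a common adjustment set $U$ valid under Assumption~\ref{ass:id}. Set
\[
\delta_j:=p_j-\lambda_j^{\dop},\qquad w_j^\cup:=\|p_j\|^2-\|\lambda_j^{\dop}\|^2 .
\]
Joint admissibility is used only to make $P_\cup$ a legitimate projection onto the immersed pooled filtration. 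Lemma~\ref{lem:wellposed} then gives $\lambda_j^{\dop}\in\Lt$, so every inner product that appears below is finite.

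\emph{Loss.} $P_\cup$ is a linear orthogonal projection, so
\[
\lambda^B-P_\cup\lambda^B=\sum_j\alpha_j(\lambda_j-P_\cup\lambda_j)=\sum_j\alpha_j r_j .
\]
Expanding $\|\sum_j\alpha_j r_j\|^2$ bilinearly gives the diagonal terms $\alpha_j^2\|r_j\|^2$ and the cross terms $2\alpha_j\alpha_k\langle r_j,r_k\rangle$. The identification of $\ell_B$ as the book's information loss is the projection theorem, as in Proposition~\ref{prop:geometry} applied to $\lambda^B$. The claim that losses add iff all $\langle r_j,r_k\rangle=0$ is then immediate, up to the trivial case of vanishing weights.

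\emph{Wedge.} The key step is that the book's interventional surface is $\lambda^{B,\dop}=\sum_j\alpha_j\lambda_j^{\dop}$. This holds because the $g$-formula $\lambda\mapsto\int\E[\lambda\mid Y,U=u]\,d\PP_U(u)$ is linear in $\lambda$: it is a conditional expectation followed by integration against a fixed measure. Writing $\lambda_j^{\dop}=p_j-\delta_j$, the book wedge is
\[
w_B=\Big\|\sum_j\alpha_jp_j\Big\|^2-\Big\|\sum_j\alpha_j(p_j-\delta_j)\Big\|^2 .
\]
\begin{itemize}
\item The diagonal terms give $\alpha_j^2(\|p_j\|^2-\|p_j-\delta_j\|^2)=\alpha_j^2w_j^\cup$.
\item The $(j,k)$ cross term is $2\alpha_j\alpha_k$ times
\[
\langle p_j,p_k\rangle-\langle p_j-\delta_j,\,p_k-\delta_k\rangle
=\langle p_j,\delta_k\rangle+\langle p_k,\delta_j\rangle-\langle\delta_j,\delta_k\rangle .
\]
\end{itemize}
This is the stated formula. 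The absence of higher-order terms holds because both pieces are quadratic forms in $(\alpha_j)$.

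\emph{Generic nonvanishing.} For the closing qualitative clause I would argue as in Theorem~\ref{thm:genericwedge}. In a real-analytic structural family, each cross coefficient is real-analytic in $\vartheta$. When a shared driver lies on open back-door paths to both $\lambda_j$ and $\lambda_k$, the distortions $\delta_j$ and $\delta_k$ are both nonzero. A Gaussian instance in the style of Example~\ref{ex:gauss} with a common confounder then shows the coefficient is not identically zero, so its zero set is Lebesgue-null by \cite{mityagin}. Vanishing via orthogonality is exhibited by taking independent drivers.

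The step I expect to be the main obstacle is the linearity claim $\lambda^{B,\dop}=\sum_j\alpha_j\lambda_j^{\dop}$. It needs a single adjustment set $U$ that satisfies the back-door criterion simultaneously for every $\lambda_j$ relative to the pooled drivers. I would state this explicitly as part of applying Assumption~\ref{ass:id} to the pooled set, taking $U$ to be the union of the individual adjustment sets and checking that it still blocks all back-door paths. Everything else is finite-dimensional algebra.
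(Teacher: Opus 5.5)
Your proposal is correct and takes the same route as the paper, whose proof is just the bilinear expansion of both squared norms; your $p_j$, $\delta_j:=P_\cup\lambda_j-\lambda_j^{\dop}$, $w_j^\cup$ and the linearity $\lambda^{B,\dop}=\sum_j\alpha_j\lambda_j^{\dop}$ are what that one-line argument uses without stating. Two small corrections: the linearity does not need a common adjustment set, because $\E[\,\cdot\mid\dop(Y=y)]$ is an expectation under a single post-intervention law and so is linear in the outcome whichever valid set identifies each term (useful here, since your union-of-adjustment-sets plan can fail by conditioning on a collider, as in M-bias); and ``losses add iff every $\langle r_j,r_k\rangle=0$'' holds only if additivity is required for all weight vectors (or there are just two portfolios), since with fixed weights and three or more portfolios the pairwise cross terms can cancel.
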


\begin{proof}
Expand both squared norms by bilinearity; the diagonals give the squared-weight
marginal terms and the off-diagonals the stated cross terms.
\end{proof}

This algebraic non-additivity is the first, quantitative layer of aggregation. The
second, qualitative layer---the subject of the next section---is structural: the
pooled filtration $\G^\cup$ may cease to be admissible, and then no decomposition
applies to the book at all, since the projection is no longer onto an immersed
filtration.

\section{When joint admissibility fails: masking obstructions}\label{sec:order3}

This is the paper's central result. We do \emph{not} classify all failures of
immersion under driver pooling; we identify a minimal, well-defined obstruction
class and prove existence and minimality within it. The class is the
filtration-theoretic counterpart of Bernstein's phenomenon, in which three events
are pairwise but not mutually independent.

\begin{definition}[Masking obstruction; pairwise-invisible obstruction]\label{def:masking}
A family $\{H_1,\dots,H_m\}$ of driver fields has a \emph{masking obstruction} of
order $m$ (relative to the terminal price innovation generating $\F_T$) if every
proper subfamily $\{H_i:i\in I\}$, $I\subsetneq\{1,\dots,m\}$, is independent of
$\F_T$, while the full $m$-tuple $\bigvee_{i=1}^m H_i$ reveals a nontrivial function
of $\F_T$. Masking can occur at order two: with $\varepsilon_2=\varepsilon_1 Z$ for
$Z$ a function of $\F_T$ and $\varepsilon_1$ independent of $\F_T$, each singleton is
independent of $\F_T$ while the pair reveals $Z$. We therefore study the stronger
\emph{pairwise-invisible} obstruction: the family is such that every singleton
\emph{and every pair} is independent of $\F_T$ (so every one- and two-driver
sub-book is immersed by Lemma~\ref{lem:indep} and passes every singleton and
pairwise admissibility screen), while some larger subfamily reveals a function of
$\F_T$. ``Order three'' below is minimal \emph{relative to pairwise diagnostic
regimes}: it is the smallest order at which an obstruction can hide from all
singleton and pairwise screens, not a claim that masking cannot occur algebraically
at order two.
\end{definition}

The next theorem establishes that a pairwise-invisible obstruction (every singleton
and every pair admissible, the triple not) exists at order three, and that three is
the minimal order at which an obstruction can hide from all singleton and pairwise
screens. As noted in Definition~\ref{def:masking}, masking can occur algebraically
at order two (the XOR relation $\varepsilon_2=\varepsilon_1 Z$); what cannot occur
at order two is a pairwise-invisible obstruction, since the revealing pair is itself
a two-driver sub-book that a pairwise screen detects.

\begin{theorem}[Existence and minimality of a pairwise-invisible obstruction]\label{thm:order3}
Let $Z=\sgn(B^1_T-B^1_{T/2})\in\F_T$, let $\varepsilon_1,\varepsilon_2$ be
independent Rademacher variables independent of $\F_T$, and set
$\varepsilon_3=\varepsilon_1\varepsilon_2 Z$. Fix the revelation timing: each
$\varepsilon_i$ is the externally enlarged driver information of portfolio $i$ of
type \textup{(D2)}, revealed at the common time $T/2$; that is, portfolio $i$ has
enlarged filtration $\G^i$, the usual augmentation of $\F\vee\sigma(\varepsilon_i
\mathbf 1_{\{t\ge T/2\}})$, and the pooled book has union filtration
$\G^{\cup}$, the augmentation of $\F\vee\sigma(\varepsilon_1,\varepsilon_2,
\varepsilon_3)$ activated at $T/2$. Then $\{\varepsilon_1,\varepsilon_2,
\varepsilon_3\}$ is a pairwise-invisible obstruction in the sense of
Definition~\ref{def:masking}:
\begin{enumerate}
\item[(i)] for every proper $I\subsetneq\{1,2,3\}$, $\sigma(\varepsilon_i:i\in I)$
is independent of $\F_T$, so each portfolio and each pair is jointly admissible
(immersion holds by Lemma~\ref{lem:indep}) and its decomposition is well posed;
\item[(ii)] $\varepsilon_1\varepsilon_2\varepsilon_3=Z$ is a nontrivial function of
a future increment, revealed at $T/2$, so $\G^{\cup}_{T/2}\supseteq\sigma(Z)$ and
the triple is not jointly admissible.
\end{enumerate}
Moreover three is minimal relative to pairwise diagnostic regimes: no
pairwise-invisible obstruction exists at order one or two. At order one a single
revealing field is itself detected by a singleton screen; at order two a revealing
pair is itself a two-driver sub-book detected by a pairwise screen (as in the XOR
relation $\varepsilon_2=\varepsilon_1 Z$, where the pair $\{\varepsilon_1,
\varepsilon_2\}$ reveals $Z$ and so fails pairwise invisibility). Three is therefore
the smallest order at which the obstruction can hide from all singleton and pairwise
admissibility screens. We make no claim that masking is impossible at order two, nor
any classification claim about non-masking obstructions.
\end{theorem}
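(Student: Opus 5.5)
The proof splits into three parts: (i) a distributional computation, (ii) a non-immersion argument, and (iii) a short logical argument for minimality.

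\emph{Part (i).} Condition on $\F_T$. Since $Z=\sgn(B^1_T-B^1_{T/2})$ is $\F_T$-measurable and $\{\pm1\}$-valued a.s. (the increment has no atom at zero), work with $\PP(\cdot\mid\F_T)$ and the given independence of $(\varepsilon_1,\varepsilon_2)$ from $\F_T$.

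For singletons, $\varepsilon_1,\varepsilon_2$ are Rademacher independent of $\F_T$ by hypothesis. For $\varepsilon_3$, note that conditionally on $\F_T$, $\varepsilon_1\varepsilon_2$ is Rademacher. Multiplying a Rademacher variable by the $\F_T$-measurable sign $Z$ leaves its conditional law Rademacher. Hence $\PP(\varepsilon_3=\pm1\mid\F_T)=\tfrac12$, which is independence.

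For pairs, $\{\varepsilon_1,\varepsilon_2\}$ holds by hypothesis. For $\{\varepsilon_1,\varepsilon_3\}$, compute $\PP(\varepsilon_1=s,\varepsilon_3=r\mid\F_T)=\PP(\varepsilon_1=s,\varepsilon_2=rsZ\mid\F_T)=\tfrac14$ for all $s,r\in\{\pm1\}$, using that $rsZ$ is $\F_T$-measurable and $\varepsilon_2$ is independent of $\F_T\vee\sigma(\varepsilon_1)$. The pair $\{\varepsilon_2,\varepsilon_3\}$ follows by symmetry. A constant conditional joint law gives independence of each pair from $\F_T$.

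Immersion of each $\G^i$ and of each pairwise union then follows from the delayed-enlargement case of Lemma~\ref{lem:indep} with $\tau=T/2$ and $H=\sigma(\varepsilon_i:i\in I)$. Well-posedness of the decomposition is then Theorem~\ref{thm:anatomy} applied on the immersed filtration.

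\emph{Part (ii).} The identity $\varepsilon_1\varepsilon_2\varepsilon_3=\varepsilon_1^2\varepsilon_2^2Z=Z$ shows $Z\in\G^\cup_{T/2}$. To show failure of immersion, exhibit an $\F$-martingale that is not a $\G^\cup$-martingale. Take $M_t=\E[Z\mid\F_t]$, a bounded $\F$-martingale. By the independent-increments property of $B^1$ together with the symmetry of the increment, $M_{T/2}=0$. But $\E[M_T\mid\G^\cup_{T/2}]=\E[Z\mid\G^\cup_{T/2}]=Z\neq0=M_{T/2}$. So $M$ fails the martingale property on $[T/2,T]$ in $\G^\cup$, i.e.\ $\F\not\hra\G^\cup$.

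\emph{Part (iii): minimality.} This is definitional, and I expect it to be the only delicate point: the claim must be stated relative to the screens, not as an algebraic impossibility. A pairwise-invisible obstruction requires every singleton and pair to be independent of $\F_T$. By Lemma~\ref{lem:indep}, every union of at most two fields is then immersed, so no revealing subfamily of size at most two exists. Hence any revealing subfamily has size at least three.

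Conversely, for an order-one or order-two family, the revealing family is itself a singleton or pair. It is therefore not independent of $\F_T$, contradicting invisibility. The XOR example $\varepsilon_2=\varepsilon_1Z$ from Definition~\ref{def:masking} illustrates that order-two masking exists but is screen-visible.

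The main care points are the measure-zero event $\{B^1_T=B^1_{T/2}\}$ and the use of the conditional-law criterion for independence. Both are routine.
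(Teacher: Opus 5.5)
Your proof is correct and follows the paper's route. It uses the same conditional-law computation given $\F_T$ for $\{3\}$ and the pairs $\{1,3\},\{2,3\}$, immersion via the delayed case of Lemma~\ref{lem:indep}, and the same essentially definitional minimality argument; strictly, though, a sub-book of size at most two reveals nothing because it is independent of $\F_T$, not because it is immersed. The only variation is the witness in (ii): your bounded $\F$-martingale $M_t=\E[Z\mid\F_t]$ needs only $Z\in\G^{\cup}_{T/2}$ and $\E[Z\mid\F_{T/2}]=0$, whereas the paper uses $B^1_t-B^1_{T/2}$ and computes the drift $\E[X\mid\G^{\cup}_{T/2}]=Z\sqrt{T/\pi}$, a quantitative form it reuses for the $2/\pi$ value in Proposition~\ref{prop:approx}.
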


\begin{proof}
For (i): for $I=\{3\}$,
$\PP(\varepsilon_3=b\mid\F_T)=\PP(\varepsilon_1\varepsilon_2=bZ\mid\F_T)=\tfrac12$
since $\varepsilon_1\varepsilon_2$ is Rademacher independent of $\F_T$; for
$I=\{1,3\}$, $\PP(\varepsilon_1=a,\varepsilon_3=b\mid\F_T)
=\PP(\varepsilon_1=a,\varepsilon_2=abZ\mid\F_T)=\tfrac14$, symmetrically for
$\{2,3\}$; $\{1\},\{2\},\{1,2\}$ are immediate, since $\sigma(\varepsilon_1,
\varepsilon_2)$ is generated by two Rademacher variables independent of $\F_T$ by
construction---this is the relevant input to Lemma~\ref{lem:indep} for the
$\{1,2\}$ case, and it is precisely here that the dependence of $\varepsilon_3$ on
$Z$ is quarantined: $Z$ enters only through the third coordinate, so no pair
recovers it. For each proper subcollection $S$,
independence of $\sigma(S)$ from $\F_T$ gives immersion of the delayed enlargement
$\G^S_t=\F_t\vee\sigma(S)\mathbf 1_{t\ge T/2}$ by Lemma~\ref{lem:indep}: nothing is
added before $T/2$, and from $T/2$ on the added field is independent of $\F_T$, so
the initial-enlargement criterion applies to $(\F_t)_{t\ge T/2}$ and martingales
are preserved throughout. For (ii):
$\varepsilon_1\varepsilon_2\varepsilon_3=Z$, so $\G^{\cup}_{T/2}\supseteq\sigma(Z)$.
Consider the $(\PP,\F)$-martingale $t\mapsto B^1_t-B^1_{T/2}$ on $[T/2,T]$, with
terminal increment $X=B^1_T-B^1_{T/2}\sim N(0,T/2)$ and $\E[X\mid\F_{T/2}]=0$. Under
the enlargement, $Z=\sgn(X)$ is $\G^{\cup}_{T/2}$-measurable, so
\[
\E[X\mid\G^{\cup}_{T/2}]=\E[X\mid Z]=Z\,\E|X|=Z\sqrt{T/\pi}\neq0 ,
\]
where $\pi=3.14159\ldots$ is the constant (the only place it denotes the constant
rather than the premium functional $\pi(\cdot)$; the meaning is clear from the
argument), using $\E|X|=\sqrt{2\sigma^2/\pi}$ for $X\sim N(0,\sigma^2)$ with
$\sigma^2=T/2$.
Thus the $\F$-martingale $B^1$ acquires a nonzero drift on $[T/2,T]$ under
$\G^{\cup}$ and is no longer a $\G^{\cup}$-martingale; immersion fails.
\end{proof}

\subsection{Two ingredients: lower-order invisibility and anticipation}\label{sec:twoing}
The construction reveals information \emph{externally}, at $T/2$. One may ask
whether a purely adapted economic mechanism---desks reacting to present
information---can supply the same obstruction. Such a mechanism supplies the
third-order masking structure but not the anticipation; we make the two ingredients
precise, since the obstruction requires both and an adapted mechanism delivers only
one.

\begin{remark}[Two ingredients: masking and anticipation]\label{rem:twofactor}
Theorem~\ref{thm:order3} and the crowding model of Definition~\ref{def:crowding}
below are best read as separating two ingredients of a pairwise-invisible
obstruction. The first is \emph{combinatorial masking}: no singleton or pair reveals
the aggregate signal $Z$, so the obstruction is invisible to all order-$\le2$
screens; this is a combinatorial property of the driver fields. The second is
\emph{anticipative coupling}: the aggregate signal $Z$ is not measurable at its
revelation time but is a nontrivial function of a future innovation, so adjoining it
breaks immersion; this is a filtration-theoretic property. Neither ingredient alone
gives the obstruction studied here. Masking without anticipation---as in the
crowding model, where $Z\in\F_0$---yields lower-order invisibility while immersion is
preserved (Proposition~\ref{prop:crowding}); anticipation without masking---a single
field or pair that reveals a future innovation---breaks immersion but is detected by
a lower-order screen. Table~\ref{tab:twofactor} summarizes the two models.
\end{remark}

\begin{table}[ht]
\centering
\caption{The two ingredients across the two models.}
\label{tab:twofactor}
\begin{tabular}{lccc}
\hline
Model & Masking & Anticipation & Immersion failure\\
\hline
Theorem~\ref{thm:order3} (external revelation) & yes & yes & yes\\
Crowding (Definition~\ref{def:crowding}) & yes & no & no\\
\hline
\end{tabular}
\end{table}

\begin{definition}[Discrete crowding model]\label{def:crowding}
On $(\Omega,\F,\PP)$ fix a two-date filtration $\F_0\subseteq\F_1$. Let
$s_1,s_2,c$ be independent Rademacher ($\pm1$, fair) random variables, all
$\F_0$-measurable, and set the third desk's position by the $\F_0$-measurable
hedging rule $s_3:=s_1 s_2 c$. Let $R>0$ be a strictly positive random variable
with $\E R^2<\infty$, $\F_1$-measurable and independent of $(s_1,s_2,c)$. At date
$1$ a margin rule liquidates the aggregate position and realizes the return
\[
X_1:=R\,(s_1 s_2 s_3),\qquad Z:=\sgn X_1=s_1 s_2 s_3 ,
\]
so the realized direction is the crowding configuration and the magnitude $R$ is
exogenous. Define the pooled filtration $\G_t=\F_t\vee\sigma(s_1,s_2,s_3)$.
\end{definition}

\begin{proposition}[What the crowding model does and does not give]\label{prop:crowding}
In the model of Definition~\ref{def:crowding}:
\textup{(i)} each $s_i$ is $\F_0$-measurable, so no desk uses information beyond
date $0$; \textup{(ii)} the reference increment $X_1$, taken net of its
$\F_0$-conditional mean, is a martingale increment generated by the positions
through the liquidation map, not an exogenous innovation; \textup{(iii)} each
singleton $s_i$ and each pair $s_i s_j$ \textup{(}$i\neq j$\textup{)} is independent
of $Z$, while $s_1 s_2 s_3=Z$ determines it, so the masking algebra
$s_3=s_1 s_2 Z$ of Theorem~\ref{thm:order3} holds. \textup{(iv)} However, because
$Z=c$ is here $\F_0$-measurable, $\sigma(Z)\subseteq\F_0$ and the pooled filtration
does \emph{not} fail immersion: the model reproduces the masking algebra and the
lower-order invisibility, but not the filtration obstruction itself.
\end{proposition}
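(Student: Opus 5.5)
Item (i) follows directly from the construction: $s_1,s_2,c$ are $\F_0$-measurable and $s_3=s_1s_2c$ is a product of $\F_0$-measurable variables, so no desk's position uses information beyond date $0$. I will prove the remaining items in the order (iii), (iv), (ii), because the independence computation in (iii) is the input the other two need.

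\textbf{Item (iii).} First compute $Z=s_1s_2s_3=s_1^2s_2^2c=c$. Since $X_1=R\,Z$ with $R>0$, this also gives $\sgn X_1=Z$. The check then reduces to independence statements about $(s_1,s_2,c)$, which are three independent fair Rademacher variables.
\begin{itemize}
\item $s_1$ and $s_2$ are independent of $c=Z$ by hypothesis.
\item $s_3=s_1s_2c$ is independent of $c$, because $s_1s_2$ is a fair Rademacher variable independent of $c$. Multiplying it by $c$ preserves uniformity conditionally on $c$, so $\PP(s_3=b\mid c)=\tfrac12$.
\item The pair products are $s_1s_2$, $s_1s_3=s_2c$ and $s_2s_3=s_1c$. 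Each is a fair sign conditionally on $c$, so each is independent of $c$.
\end{itemize}
If ``each pair'' is read as the pair $\sigma$-field $\sigma(s_i,s_j)$, I will run the same conditional computation as in the proof of Theorem~\ref{thm:order3}. For example, $\PP(s_1=a,s_3=b\mid c)=\PP(s_1=a,s_2=abc\mid c)=\tfrac14$. Finally, $s_1s_2s_3=Z$ holds identically, which is exactly the masking algebra $s_3=s_1s_2Z$.

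\textbf{Item (iv).} Since $Z=c\in\F_0$ and every $s_i\in\F_0$, we have $\sigma(s_1,s_2,s_3)\subseteq\F_0\subseteq\F_t$ for both dates. Hence $\G_t=\F_t$ and the pooled filtration coincides with $\F$. Immersion is then trivial, since every $\F$-martingale is a $\G$-martingale.

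\textbf{Item (ii).} This is the step I expect to be the main obstacle, because ``generated by the positions'' is interpretive and the precise claim has to be pinned down first. I will define the centred increment $X_1-\E[X_1\mid\F_0]$. The variable $Z$ is $\F_0$-measurable, and $R$ is independent of $(s_1,s_2,c)$. To get the conditional mean I will use independence of $R$ from $\F_0$, i.e.\ I read ``exogenous'' as $R$ being independent of $\F_0$. Under that reading,
\[
\E[X_1\mid\F_0]=Z\,\E R,\qquad X_1-\E[X_1\mid\F_0]=Z\,(R-\E R).
\]
This is square integrable because $\E R^2<\infty$, and it has zero $\F_0$-conditional mean, so it is a one-step martingale increment. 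Its sign factor $Z$ is a deterministic function of the positions, namely the liquidation map $(s_1,s_2,s_3)\mapsto R\prod_i s_i$. That is the sense in which the increment is position-generated rather than an exogenous innovation whose sign lies in the future.

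If $R$ is only independent of $(s_1,s_2,c)$ and not of all of $\F_0$, I will instead take $\E[R\mid\F_0]$, and the same identity holds with $\E R$ replaced by $\E[R\mid\F_0]$.
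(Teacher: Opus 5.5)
Your proposal is correct and follows the paper's proof essentially item by item: $Z=c$, the conditional-count independence checks for singletons and pairs, $\sigma(s_1,s_2,s_3)\subseteq\F_0$ so that the pooled filtration coincides with $\F$ and immersion holds trivially, and the centred increment $Z(R-\E R)$ with zero $\F_0$-conditional mean. Your fallback $\E[X_1\mid\F_0]=Z\,\E[R\mid\F_0]$ is a worthwhile addition: the paper's proof of (ii) uses independence of $R$ from all of $\F_0$, whereas Definition~\ref{def:crowding} only assumes independence from $(s_1,s_2,c)$, and your version covers the hypothesis as actually stated.
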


\begin{proof}
(i) holds by construction: $s_1,s_2,c$ are $\F_0$-measurable and
$s_3=s_1 s_2 c$ is a measurable function of them, so $Z=s_1 s_2 s_3=c$ is
$\F_0$-measurable. (ii) Since $X_1=R\,Z$ with $Z$ $\F_0$-measurable and $R>0$
independent of $\F_0$, $\E[X_1\mid\F_0]=Z\,\E R$; the centered increment
$\widetilde X_1:=X_1-\E[X_1\mid\F_0]=Z(R-\E R)$ satisfies
$\E[\widetilde X_1\mid\F_0]=0$. (iii) $Z=c$ is a fair Rademacher drawn
independently of $s_1,s_2$, and $s_3=s_1 s_2 c$; hence each $s_i$ is independent of
$Z$, and each pair is independent of $Z$:
$\PP(s_1=a,s_2=b,Z=z)=\tfrac18=\PP(s_1=a,s_2=b)\PP(Z=z)$, and for $(s_1,s_3)$,
since $(s_2,c)\perp s_1$, the pair is independent of $Z=c$ by the same count; the
triple gives $s_1 s_2 s_3=c=Z$. (iv) $\sigma(Z)=\sigma(c)\subseteq\F_0$, so adjoining
$\sigma(Z)$ adds nothing to $\F_0$ and every $\F$-martingale remains a
$\G$-martingale; immersion holds, and no drift is acquired.
\end{proof}

\begin{remark}[Crowding versus revealed information: the irreducible distinction]\label{rem:crowding}
Proposition~\ref{prop:crowding} isolates exactly what an adapted economic mechanism
supplies and what it does not. Read constructively, the crowding model demonstrates
the harder half of the phenomenon: it shows how an \emph{adapted} positioning rule
can manufacture a third-order aggregate signal $Z=s_1s_2s_3$ that is invisible to
every pairwise test, with all the masking algebra of Theorem~\ref{thm:order3}
present. What it does \emph{not} supply, on its own, is anticipation: because
$Z=c\in\F_0$, conditioning on the triple reveals no future innovation, and the
pooled filtration remains immersed. The crowding model produces the appearance of a
third-order effect but not anticipation; the two are separate ingredients.

The distinction is therefore not that crowding fails, but that admissibility failure
needs \emph{both} ingredients, and crowding provides only the first. To obtain a
genuine immersion failure the third-order aggregate signal must in addition be
coupled to a future innovation---equivalently, the pooled object must be a
price/order-flow statistic that reveals information not in $\F_0$, so that
$\sigma(Z)\not\subseteq\F_t$ at the relevant $t$. In the construction of
Theorem~\ref{thm:order3} this coupling is supplied externally, with
$Z\in\F_T\setminus\F_{T/2}$; then immersion genuinely fails, but the revealing
information is external to the adapted drivers. The two regimes---adapted-but-not-
anticipative (crowding) and anticipative-but-externally-revealed
(Theorem~\ref{thm:order3})---supply complementary halves, and \emph{cannot be
merged} within either model alone. Building a single dynamic mechanism in which
adapted positioning endogenously generates a price filtration that reveals a future
innovation---thereby supplying both halves at once---is the open problem we do not
resolve here.
\end{remark}

We separate immersion failure from the stronger notions it does not entail.

\begin{remark}[Admissibility, not arbitrage]\label{rem:nupbr}
We separate four notions kept distinct throughout. \emph{Immersion}
($\F\hookrightarrow\G$, the $\mathcal H$-hypothesis) is that every
$(\PP,\F)$-martingale remains a $(\PP,\G)$-martingale; what we prove is its failure.
This is weaker than failure of the \emph{$\mathcal H'$-hypothesis} (semimartingale
preservation), of \emph{NUPBR} (existence of a strictly positive local-martingale
deflator), or of an \emph{equivalent local martingale measure}: a filtration can
fail immersion while still preserving semimartingales and admitting a deflator.
Theorem~\ref{thm:order3} proves only the immersion failure. The sign $Z\in\F_T$
becomes known at the deterministic time $T/2$; this is an enlargement that adds
nothing before $T/2$ and adjoins $\sigma(Z)$ from $T/2$ on, i.e.\ $\G_t=\F_t$ for
$t<T/2$ and $\G_t=\F_t\vee\sigma(Z)$ for $t\ge T/2$, an initial enlargement of
$(\F_t)_{t\ge T/2}$ by $\sigma(Z)$, not a progressive enlargement by a random time.
Across $T/2$ the reference martingale $B^1$ acquires the drift
$\E[B^1_T-B^1_{T/2}\mid Z]=Z\sqrt{T/\pi}$, so it is no longer a $\G$-martingale. We
do not use, and do not require, any failure of no-arbitrage: under standard
initial-enlargement criteria (Jacod \cite{jacod}) one may still obtain
semimartingale preservation and a deflator, but this is not part of our
contribution. The point we use is only that the pooled wedge and loss, though well
defined as economic quantities, are computed on an information set that anticipates
the future and so do not constitute an \emph{admissible projection}.
\end{remark}

Exact masking is the zero-noise limit of a continuous family: a noisy higher-order
signal degrades the obstruction smoothly.

\begin{proposition}[Approximate masking and the single-increment immersion defect]\label{prop:approx}
Let $X$ be a future reference innovation with $\E[X]=0$ (a martingale increment),
$Z=\sgn X$, and let three driver signs carry incremental third-order predictive
information
\[
\varepsilon:=R^2\big(X\mid\varepsilon_1,\varepsilon_2,\varepsilon_3\big)
-\max_{\{i,j\}}R^2\big(X\mid\varepsilon_i,\varepsilon_j\big)\ \in[0,1],
\]
the triple's predictive $R^2$ for $X$ in excess of its best pair, with $R^2$ the
explained-variance fraction (well defined since $X$ is centered). By the
nested-projection inequality $R^2(X\mid\varepsilon_1,\varepsilon_2,\varepsilon_3)
\ge\max_{\{i,j\}}R^2(X\mid\varepsilon_i,\varepsilon_j)$, so $\varepsilon\ge0$, with
maximum $2/\pi$ at exact sign masking. Let $P_3$ and $P_{ij}$ denote the $L^2$
projections of $X$ onto $\sigma(\varepsilon_1,\varepsilon_2,\varepsilon_3)$ and onto
$\sigma(\varepsilon_i,\varepsilon_j)$; since $\sigma(\varepsilon_i,\varepsilon_j)
\subseteq\sigma(\varepsilon_1,\varepsilon_2,\varepsilon_3)$, the projections are
nested and $P_3 X-P_{ij}X$ is orthogonal to the range of $P_{ij}$. Define the
\emph{single-increment immersion defect} as the squared $L^2$ norm of the
\emph{incremental} predictable drift,
\[
\Delta:=\big\|P_3 X-P_{i^\star j^\star}X\big\|_2^2,
\qquad \{i^\star,j^\star\}=\argmax_{\{i,j\}}\|P_{ij}X\|_2^2,
\]
the orthogonal increment of the triple projection over the best pair projection,
not a difference of norms. (This is a finite-horizon, single-innovation quantity,
not a statement about immersion at all martingales and times.) Then
\[
\Delta=\operatorname{Var}(X)\,\varepsilon .
\]
Hence $\Delta\to0$ as $\varepsilon\to0$, and the exact obstruction of
Theorem~\ref{thm:order3} is recovered as $\varepsilon\uparrow 2/\pi$. For
$\varepsilon>0$ the pooled filtration acquires a forecasting advantage of size
$\operatorname{Var}(X)\varepsilon$ but, by Remark~\ref{rem:nupbr}, no arbitrage;
approximate masking degrades admissible projection continuously.
\end{proposition}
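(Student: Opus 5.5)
The identity is the Pythagorean theorem for nested $L^2$ projections, applied once to $X$ and once to the best pair. I would work in $L^2(\PP)$ with the centered innovation $X$, $\E[X]=0$, and write $R^2(X\mid\mathcal H):=\|P_{\mathcal H}X\|_2^2/\Var(X)$. Since $\E X=0$ and $P_{\mathcal H}$ is conditional expectation (so constants lie in the range and $\E[P_{\mathcal H}X]=0$), this equals the explained-variance fraction $\Var(\E[X\mid\mathcal H])/\Var(X)$.

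First, nestedness $\sigma(\varepsilon_i,\varepsilon_j)\subseteq\sigma(\varepsilon_1,\varepsilon_2,\varepsilon_3)$ gives $P_{ij}=P_{ij}P_3$. Hence $P_3X-P_{ij}X=(I-P_{ij})P_3X$ is orthogonal to the range of $P_{ij}$, and
\[
\|P_3X\|_2^2=\|P_{ij}X\|_2^2+\|P_3X-P_{ij}X\|_2^2 .
\]
This holds for every pair, in particular for $\{i^\star,j^\star\}$. It is the same orthogonality used in Lemma~\ref{thm:projection}, now in $L^2(\PP)$ for a single time.

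Second, dividing by $\Var(X)$ gives
\[
\Delta/\Var(X)=R^2(X\mid\varepsilon_1,\varepsilon_2,\varepsilon_3)-R^2(X\mid\varepsilon_{i^\star},\varepsilon_{j^\star}).
\]
The pair maximizing $\|P_{ij}X\|_2^2$ is exactly the pair maximizing $R^2$, so the subtracted term is $\max_{\{i,j\}}R^2$ and the right-hand side is $\varepsilon$. Nonnegativity of $\varepsilon$ and the bound $\varepsilon\le 1$ follow from the same display. The limit $\Delta\to0$ as $\varepsilon\to0$ is immediate because $\Var(X)$ is a fixed finite constant.

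Third, I would compute the exact-masking value in the construction of Theorem~\ref{thm:order3}, with $X=B^1_T-B^1_{T/2}\sim N(0,T/2)$. By part (i) of that theorem each pair is independent of $\F_T\ni X$, so $P_{ij}X=\E X=0$. The triple determines $Z$, and $X$ depends on $(\varepsilon_1,\varepsilon_2,\varepsilon_3)$ only through $Z$, so $P_3X=\E[X\mid Z]=Z\,\E|X|$. This gives $\Delta=(\E|X|)^2=\tfrac{2}{\pi}\Var(X)$, i.e.\ $\varepsilon=2/\pi$.

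The main obstacle is justifying that $2/\pi$ is the maximum of $\varepsilon$. For Gaussian $X$, any $\sigma$-field generated by three signs is finite; I would argue that the maximal predictive $R^2$ is bounded by what is carried through a symmetric statistic of $X$. Honestly, though, a sign-only field is not forced to be $\sigma(Z)$: finer partitions of $X$ encoded in three bits give a larger $R^2$. So I would restrict the claim to the setting of the statement, where the drivers carry information about $X$ only through $Z=\sgn X$. In that setting $\varepsilon\le R^2(X\mid Z)=2/\pi$ by the tower property, since $\sigma(\varepsilon_1,\varepsilon_2,\varepsilon_3)\vee\sigma(Z)$ projects $X$ onto $\sigma(Z)$ under the conditional-independence assumption. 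The remaining no-arbitrage sentence is simply a citation of Remark~\ref{rem:nupbr}.
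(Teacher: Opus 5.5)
Your proposal is correct and is the paper's own argument: the explained-variance identity $\|P_{\mathcal C}X\|_2^2=\Var(X)\,R^2(X\mid\mathcal C)$ for centered $X$, the Pythagorean identity from $P_{i^\star j^\star}=P_{i^\star j^\star}P_3$, and the exact-masking evaluation $P_{ij}X=0$, $P_3X=\E[X\mid Z]=Z\,\E|X|$, which gives $\varepsilon=2/\pi$. Your caveat on maximality is justified and goes beyond the paper, whose proof only evaluates $\varepsilon$ at exact masking and never shows $2/\pi$ is a maximum; for your objection to bite, the pairs must also stay uninformative, and a parity labeling achieves this (split each half-line into four intervals with equal $|\E[X\mathbf 1_{I}]|$ and label them so that $\varepsilon_1\varepsilon_2\varepsilon_3=\sgn X$; every pair then has $R^2=0$ while the triple has $R^2\approx0.955$), so a restriction such as your conditional-independence one is genuinely needed.
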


\begin{proof}
Since $X$ is centered, for any conditioning $\sigma$-field $\mathcal C$ the
projection satisfies $\|P_{\mathcal C}X\|_2^2=\|\E[X\mid\mathcal C]\|_2^2
=\operatorname{Var}(X)\,R^2(X\mid\mathcal C)$, the explained-variance identity. By
nesting, $P_{i^\star j^\star}X=P_{i^\star j^\star}P_3 X$, so $P_3 X-P_{i^\star
j^\star}X$ is the component of $P_3 X$ orthogonal to
$\sigma(\varepsilon_{i^\star},\varepsilon_{j^\star})$, and by the Pythagorean
identity $\|P_3X-P_{i^\star j^\star}X\|_2^2=\|P_3X\|_2^2-\|P_{i^\star j^\star}X\|_2^2$.
Substituting the explained-variance identity for each term gives
$\Delta=\operatorname{Var}(X)\big(R^2(X\mid\varepsilon_1,\varepsilon_2,\varepsilon_3)
-R^2(X\mid\varepsilon_{i^\star},\varepsilon_{j^\star})\big)
=\operatorname{Var}(X)\,\varepsilon$. At exact masking every pair has $R^2=0$ while
the triple has $R^2=\E[X\mid Z]^2/\operatorname{Var}(X)=(T/\pi)/(T/2)=2/\pi$,
recovering Theorem~\ref{thm:order3}.
\end{proof}

\subsection{From projection to realized premium}
\label{sec:bridge}
The premium is a projection; realizing it requires a position that responds to the
projected price of risk. The order-three obstruction is precisely where the
projection algebra assigns value to information that is inadmissible.

\begin{proposition}[Admissible realized premium versus inadmissible diagnostic premium]\label{prop:monetize}
Let a portfolio have excess-return increment $dr_t=\sigma_t\lambda_t\,dt+\sigma_t\,dW_t$
with $\sigma_t>0$ and $\G_t$-measurable. \textup{(i)} The $\G$-measurable position
$\phi^\star_t=\E[\lambda_t\mid\G_t]/\sigma_t$ realizes expected premium
$\E\int_0^T\phi^\star_t\,dr_t=\pi(\G)$; the attainable projection is exactly what an
information-responsive strategy realizes. \textup{(ii)} If $\G$ is admissible
\textup{(}$\F\hookrightarrow\G$\textup{)}, this is \emph{admissible realized
premium}: intervention-stable up to the confounding wedge of
Theorem~\ref{thm:gap}, and no strategy realizes premium from future information.
\textup{(iii)} If a pooled book filtration fails admissibility through the
order-three masking relation of Theorem~\ref{thm:order3}, then the same algebra,
applied to the pooled signal $\varepsilon_1\varepsilon_2\varepsilon_3=Z$, computes a
value $\tfrac12\E|X|>0$ that depends on the future increment $X$, while every
singleton- or pair-measurable position computes zero. This is \emph{inadmissible
diagnostic premium}: it quantifies the anticipative content of the inadmissible
projection, and is \emph{not} a claim that the signal is available to a
non-anticipative trader. It is a diagnostic statement about the economic content of
the projection, invisible to all lower-order diagnostics and nonzero only at order
three.
\end{proposition}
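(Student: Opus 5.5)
Part (i) is a direct computation. Substituting $\phi^\star_t=\E[\lambda_t\mid\G_t]/\sigma_t$ into $dr_t=\sigma_t\lambda_t\,dt+\sigma_t\,dW_t$ gives
\[
\int_0^T\phi^\star_t\,dr_t=\int_0^T\E[\lambda_t\mid\G_t]\,\lambda_t\,dt+\int_0^T\E[\lambda_t\mid\G_t]\,dW_t .
\]
The plan is to take expectations of the two terms separately. For the drift term, condition on $\G_t$ inside the integral and use Fubini (legitimate since $\lambda\in\Lt$). The tower property then gives $\E\int_0^T\E[\lambda_t\mid\G_t]\lambda_t\,dt=\E\int_0^T\E[\lambda_t\mid\G_t]^2\,dt=\pi(\G)$. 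Equivalently, this is the orthogonality $\langle P_\G\lambda,\lambda-P_\G\lambda\rangle=0$ from Lemma~\ref{thm:projection}. For the stochastic term, we need its expectation to vanish. This requires $W$ to be a $\G$-martingale and the integrand to be $\G$-progressive and square integrable, because then the It\^o integral is a true $\G$-martingale started at zero.

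This is where (ii) enters. Under $\F\hookrightarrow\G$, the $\F$-Brownian motion $W$ remains a $\G$-martingale with the same quadratic variation, so by L\'evy's characterization it is a $\G$-Brownian motion. The integrand $\E[\lambda_t\mid\G_t]$ has $L^2$ norm at most $\|\lambda\|$ by conditional Jensen. Hence the stochastic integral has mean zero and the realized premium equals $\pi(\G)$. Two further statements then follow by citation. The ``intervention-stable up to the wedge'' reading is Theorem~\ref{thm:anatomy} applied to $\G=\G^{Y_A}$. The claim that no strategy extracts value from future information follows because, under immersion, every $\G$-predictable integral against $W$ has zero mean. Therefore expected gain comes only from the drift term, which is a function of $\G$-projected $\lambda$.

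For (iii), I would work on $[T/2,T]$ in the setting of Theorem~\ref{thm:order3}. Take $X=B^1_T-B^1_{T/2}$ as the reference innovation and consider positions that are constant on $[T/2,T]$ and measurable with respect to a sub-$\sigma$-field $\mathcal C$ of the driver fields. The computed value of the position $\phi=\tfrac12\E[X\mid\mathcal C]/\E|X|$, or more simply $\phi=\tfrac12\sgn$ of the conditional mean, is $\E[\phi X]$. For $\mathcal C=\sigma(\varepsilon_1,\varepsilon_2,\varepsilon_3)$ we use $Z=\varepsilon_1\varepsilon_2\varepsilon_3$, and the proof of Theorem~\ref{thm:order3} gives $\E[X\mid Z]=Z\,\E|X|$. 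The choice $\phi=\tfrac12 Z$ then yields $\E[\tfrac12 ZX]=\tfrac12\E|X|=\tfrac12\sqrt{T/\pi}>0$. For any singleton or pair field $\mathcal C$, part (i) of Theorem~\ref{thm:order3} says $\mathcal C$ is independent of $\F_T\ni X$. Hence $\E[\phi X]=\E[\phi]\E[X]=0$ for every $\mathcal C$-measurable $\phi$, and the value vanishes identically at orders one and two.

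The interpretive qualifier (``diagnostic, not available to a non-anticipative trader'') is justified by the immersion failure proven in Theorem~\ref{thm:order3}(ii) together with Remark~\ref{rem:nupbr}, so no extra argument is needed there. I expect the main obstacle to be step (ii): making rigorous that $W$ stays a $\G$-Brownian motion so that the stochastic integral has zero expectation. In particular, this requires checking that $\phi^\star$ is $\G$-predictable rather than merely measurable, possibly by taking a predictable version, and that the integrability suffices for a true rather than local martingale. I would handle this through the $L^2$ bound above and the It\^o isometry.
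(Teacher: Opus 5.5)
Your proposal is correct and follows the paper's route: the tower-property computation for (i), the wedge identity plus the vanishing martingale part for (ii), and $\E[X\mid Z]=Z\,\E|X|$ set against independence of the lower-order fields for (iii). You are more careful than the paper in three places. First, the paper simply asserts that the noise term in (i) is mean-zero given $\G_t$; you correctly note this requires $W$ to be a $\G$-martingale, which immersion supplies, so (i) is really established under admissibility. Second, the paper derives the lower-order zeros from independence of $Z$, whereas you use independence of $\F_T\ni X$. Third, the paper takes the unit position $Z$ and obtains $\E|X|$, so your choice of $\tfrac12 Z$ is just a position-size normalization that reproduces the stated constant.
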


\begin{proof}
(i) With $\sigma_t$ $\G_t$-measurable,
$\E[\phi^\star_t\,dr_t\mid\G_t]=\E[\lambda_t\mid\G_t]\sigma_t^{-1}\cdot\sigma_t\E[\lambda_t\mid\G_t]\,dt
=\E[\lambda_t\mid\G_t]^2\,dt$ (the noise term is mean-zero given $\G_t$);
integrate and take expectations. (ii) Under immersion the projected price of risk is
an admissible object and its causal content is $\pi^{\dop}$ plus the wedge by
Theorem~\ref{thm:gap}; the $\F$-martingale part contributes no predictable premium,
so the realized premium is implementable by a non-anticipative trader.
(iii) Under the masking relation $Z=\varepsilon_1\varepsilon_2\varepsilon_3\in\F_T$
is measurable with respect to the pooled field, so the value
$\E[\,Z\cdot X\,]=\E[\sgn(X)X]=\E|X|>0$ is computed by the projection algebra;
because $Z$ is a function of the future increment, this value is diagnostic, not
implementable. Any singleton or pair is independent of $Z$ and hence of $\sgn(X)$,
giving zero by independence. \qed
\end{proof}

This is the link to the decision layer. The projections of this paper say what is
attributable; an optimizer mapping admissible driver information to positions, as in
the causal allocation framework of \cite{cpcm}, realizes them. Operating on a book
whose pooled filtration is inadmissible, such a strategy realizes the
anticipative premium of Proposition~\ref{prop:monetize}(iii) while
passing every pairwise diagnostic: the order-three obstruction is the exact point at
which the projection algebra assigns value to inadmissible information. Restricting the optimizer to the maximal admissible sub-books eliminates the anticipative component.

\subsection{Maximal admissible sub-books}\label{sec:subbooks}

\begin{proposition}[Attribution domains are independent sets]\label{prop:mis}
Encode in the $3$-uniform hypergraph $\mathcal H$ one hyperedge per driver triple
realizing the masking relation of Theorem~\ref{thm:order3}. Within this
masking-obstruction class---that is, when the only obstructions present are
order-three masking relations---the domains on which the book-level decomposition
of Proposition~\ref{prop:book} is admissibly defined are exactly the maximal
independent sets of $\mathcal H$. A portfolio common to several masking triples is
the natural vertex to drop, since its removal breaks all of them simultaneously. We
do not claim $\mathcal H$ captures every possible obstruction; higher-order or
non-masking obstructions, if present, would add further excluded sets.
\end{proposition}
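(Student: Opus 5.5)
The plan is to prove the characterization by showing two inclusions: a sub-book $S$ is admissibly decomposable exactly when it contains no hyperedge of $\mathcal H$, and the maximal such $S$ are the maximal independent sets. The working hypothesis is the one the statement imposes: the only obstructions present are order-three masking relations. Concretely, this means that for any sub-book $S$ whose driver fields contain no full masking triple, the joint field $\bigvee_{i\in S}H_i$ is independent of $\F_T$.

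Given that, the first direction is direct. Let $S$ be independent in $\mathcal H$. Then $\sigma(H_i:i\in S)$ is independent of $\F_T$, and Lemma~\ref{lem:indep} applies. This holds for both initial and delayed revelation at a deterministic time, and it gives immersion $\F\hookrightarrow\G^{S}$, where $\G^S$ is the pooled filtration. The union filtration of $S$ is therefore admissible. The projections $P_\cup$ in Proposition~\ref{prop:book} are then projections onto an immersed filtration, so the book-level loss and wedge are admissibly defined on $S$.

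The converse is handled next. Suppose $S$ contains a hyperedge $\{i,j,k\}$. By the masking relation of Theorem~\ref{thm:order3}(ii), the pooled field of $\{i,j,k\}$ reveals $Z$, a nontrivial function of a future increment. That argument exhibits an $\F$-martingale, namely $B^1_t-B^1_{T/2}$, which acquires the nonzero drift $Z\sqrt{T/\pi}$ under the pooled filtration. Adding further drivers only enlarges the filtration, and $Z$ stays measurable at $T/2$. So the conditional expectation $\E[X\mid\G^S_{T/2}]$ is not zero. To see this, tower through $\sigma(Z)\subseteq\G^S_{T/2}$: its $L^2$ projection onto $\sigma(Z)$ is $Z\sqrt{T/\pi}$. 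Hence immersion fails for every superset of a hyperedge, and the decomposition is not admissibly defined there.

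Combining the two directions, the admissible domains are exactly the sets that are independent in $\mathcal H$. Admissibility is hereditary under taking subsets, by Lemma~\ref{lem:indep}, so the maximal admissible domains are the maximal independent sets. The remark about dropping a shared vertex follows because deleting a vertex removes every hyperedge incident to it.

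The main obstacle is the first direction. Order-three masking relations alone do not by themselves guarantee that a hyperedge-free set has joint independence from $\F_T$, since a union of several overlapping triples could in principle encode a higher-order revelation. I would handle this by making the scope clause of the statement into an explicit hypothesis: no sub-family revealing a function of $\F_T$ other than those containing a hyperedge. This matches the disclaimer about higher-order or non-masking obstructions. I would then check that in the canonical construction, where each hyperedge uses fresh independent Rademacher variables and its own $Z$, the hypothesis holds by the same conditional-probability count used in Theorem~\ref{thm:order3}(i).
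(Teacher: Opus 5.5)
Your proof is correct and follows the paper's route: Theorem~\ref{thm:order3}(ii) shows that sets containing a hyperedge fail admissibility, the scope clause supplies the converse, and the maximal independent sets are then read off. You are more explicit than the paper on two points it leaves implicit. First, you give the tower-property argument that every superset of a hyperedge still fails immersion. Second, you formalize the scope clause as joint independence of hyperedge-free sub-books from $\F_T$ and then invoke Lemma~\ref{lem:indep}; this is slightly stronger than the paper's direct reading that admissibility fails only on sets containing a hyperedge.
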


\begin{proof}
By Theorem~\ref{thm:order3}, a vertex set carrying a masking triple fails joint
admissibility, and by construction $\mathcal H$ records exactly those triples; so
within the masking-obstruction class, joint admissibility fails iff the set
contains a hyperedge of $\mathcal H$, the admissible sub-books are the independent
sets, and the maximal ones the maximal independent sets. On each, $P_\cup$ projects
onto an admissible filtration so Proposition~\ref{prop:book} applies; on any set
containing a hyperedge it does not.
\end{proof}

\section{Experiments}\label{sec:numerics}

We demonstrate finite-sample behavior in controlled synthetic settings: the
decomposition is estimable from data, the causal correction is necessary and
recoverable, and the order-three screen detects planted leakage while producing no
false positives among the sampled clean triples. The settings are
synthetic and controlled, with the structural model known, so estimates can be
checked against ground truth; no proprietary data is used and every figure is
reproducible from the supplementary scripts. We do not claim a particular empirical
frequency of the obstruction in real books---that is an empirical question beyond
this paper---but the experiments show the tools work where the truth is known and
degrade gracefully away from the exact construction.

\subsection{Recovery: the causal correction is necessary and estimable}
In a linear--Gaussian single-driver model with a known confounder $U$ entering both
$Y_A$ and $\lambda$, the back-door (g-formula) estimator recovers the
intervention-stable premium $\pi^{\dop}$ and the wedge, with mean absolute error
decaying as $T^{-1/2}$, while the confounding-agnostic estimator---ordinary
regression of $\lambda$ on $Y_A$, the computation a standard factor-attribution
method performs---returns the observational premium and so misattributes the wedge
as captured. The gap is exactly the confounding wedge (Figure~\ref{fig:recovery}).

\begin{figure}[ht]
\centering
\includegraphics[width=\textwidth]{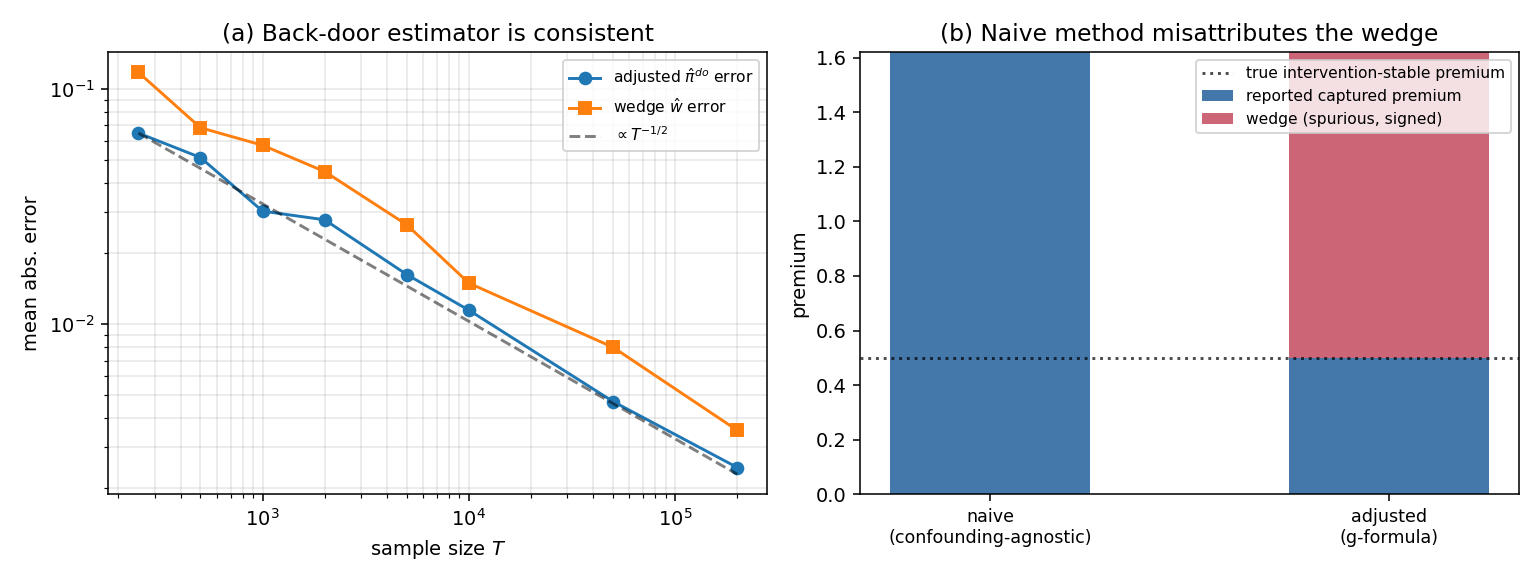}
\caption{Recovery. (a) The back-door estimator of $\pi^{\dop}$ and of the wedge is
consistent, with error $\propto T^{-1/2}$. (b) The confounding-agnostic method
reports the observational premium as captured, overstating the intervention-stable
premium by the wedge.}
\label{fig:recovery}
\end{figure}

\subsection{A multi-driver, multi-confounder panel}
The scalar picture extends to a realistic panel. We take $K=5$ observed drivers
loaded on $L=3$ latent confounders that also drive the price of risk, so the
observational attribution mixes the direct causal channel with confounding through
the shared latent factors. Here the population wedge is $+197\%$ of the
intervention-stable premium: a naive method would credit the drivers with roughly
three times the premium that survives intervention. Figure~\ref{fig:multidriver}(a)
shows the per-driver attribution under the naive and back-door methods---some
drivers are over-credited, and one even flips sign---and panel (b) confirms the
matrix g-formula estimator is consistent at the $T^{-1/2}$ rate. The decomposition
is therefore not a scalar curiosity: it operates on a covariance-matrix panel and
the causal correction matters most precisely when confounding is strong.

\begin{figure}[ht]
\centering
\includegraphics[width=\textwidth]{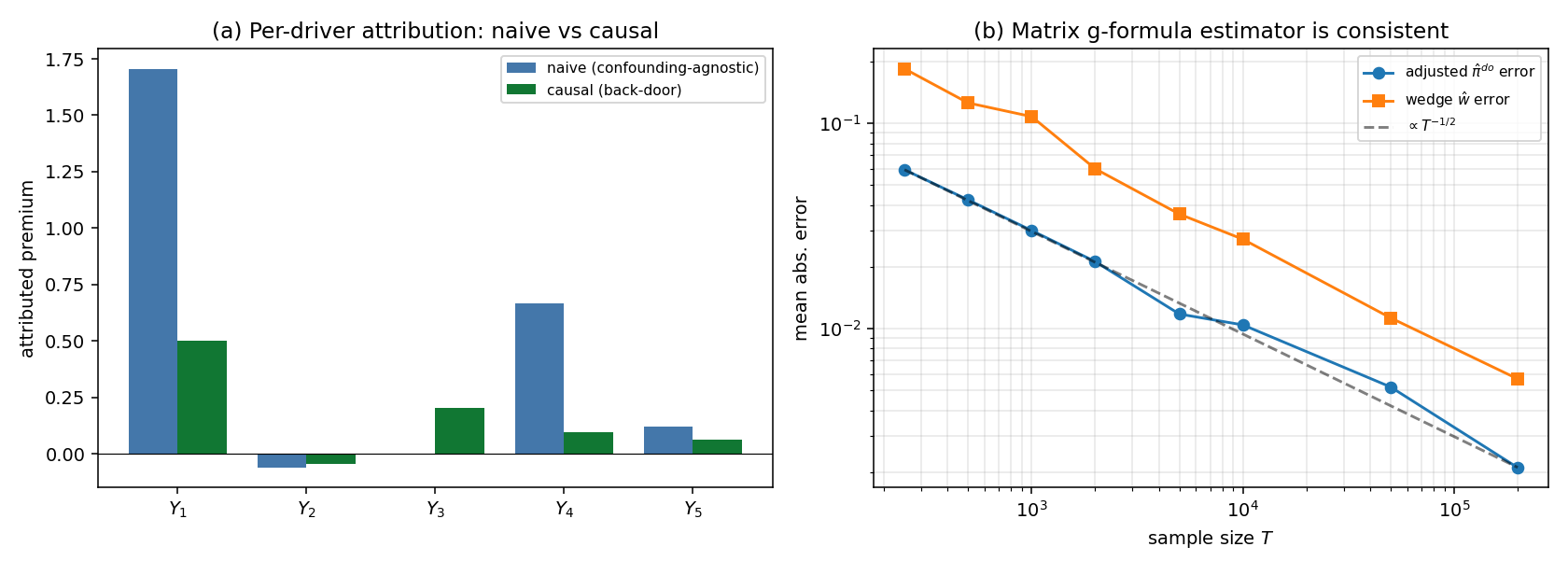}
\caption{Multi-driver, multi-confounder panel ($K=5$, $L=3$). (a) Per-driver
attributed premium: the confounding-agnostic method over-credits drivers and can
assign the wrong sign; the back-door method recovers the intervention-stable
credit. (b) The matrix g-formula estimator is consistent ($\propto T^{-1/2}$).}
\label{fig:multidriver}
\end{figure}

\subsection{Order-three obstruction and a screening protocol}
The order-three masking obstruction suggests a concrete screen for candidate driver
triples: (1)--(3) estimate the singleton, pairwise, and triple predictive $R^2$ of
the reference innovation; (4) form the incremental third-order signal
$\varepsilon=R^2(X\mid\varepsilon_1,\varepsilon_2,\varepsilon_3)
-\max_{i,j}R^2(X\mid\varepsilon_i,\varepsilon_j)$; (5) calibrate $\varepsilon$
against a permutation null that breaks the driver--future link, flagging the triple
when the increment exceeds the upper-$\alpha$ null quantile. All $R^2$ are estimated
out of sample (cell means fit on one half, scored on the other).

Figure~\ref{fig:order3} instantiates the population construction with
$X=B^1_T-B^1_{T/2}$, $Z=\sgn(X)$, independent Rademacher $\varepsilon_1,\varepsilon_2$,
$\varepsilon_3=\varepsilon_1\varepsilon_2 Z$: every singleton and pair has
out-of-sample $R^2$ indistinguishable from zero while the triple attains
$\approx2/\pi$ (A); the observed incremental order-three $R^2$ lies far outside the
permutation null ($200$ permutations, $p\approx0.005$) while the best pair sits
inside it (B); and the noise path, replacing the exact relation by
$X=\beta\,\sgn(\varepsilon_1\varepsilon_2\varepsilon_3)+\sigma\eta$, degrades the
incremental $R^2$ continuously while $\Delta=\Var(X)\,\varepsilon$ stays
constant (C), a finite-sample confirmation of Proposition~\ref{prop:approx}.

\begin{figure}[ht]
\centering
\includegraphics[width=\textwidth]{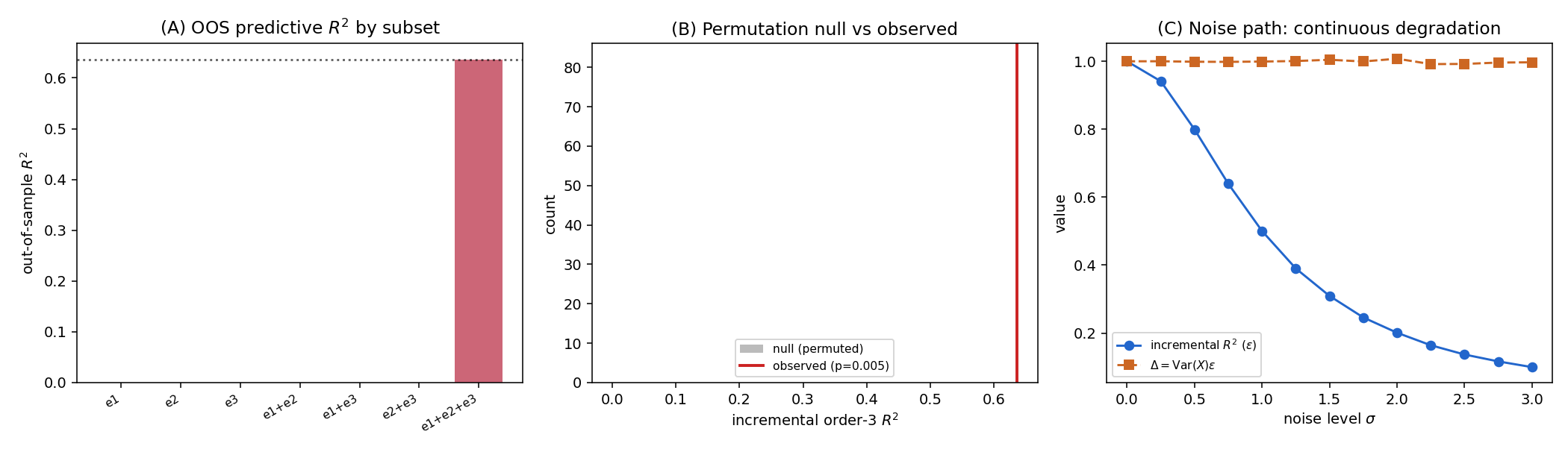}
\caption{Order-three obstruction. (A) Out-of-sample predictive $R^2$ by subset:
orders one and two are indistinguishable from zero, only the triple reaches
$2/\pi$. (B) Permutation null ($200$ permutations) versus the observed incremental
order-three $R^2$ ($p\approx0.005$). (C) Noise path: incremental $R^2$ degrades
continuously with noise while $\Delta=\Var(X)\varepsilon$ is invariant.}
\label{fig:order3}
\end{figure}

Finally we run the screen as a screen: one masking triple is planted among clean
drivers in a synthetic book, and we measure detection power and false positives.
In the synthetic design considered here, Figure~\ref{fig:screening}(a) shows the
screen separates the planted triple from clean triples and produces no false
positives among the sampled clean triples at the chosen permutation threshold; panel (b) traces
detection power against masking corruption $q$ (the fraction of the triple's
relation replaced by noise) at two sample sizes, exhibiting the expected
sample-size/signal-strength trade-off; panel (c) confirms the transition at an
intermediate sample size. The screen is reliable when the masking relation is close
to exact and the sample is adequate, and degrades gracefully otherwise.

\begin{figure}[ht]
\centering
\includegraphics[width=\textwidth]{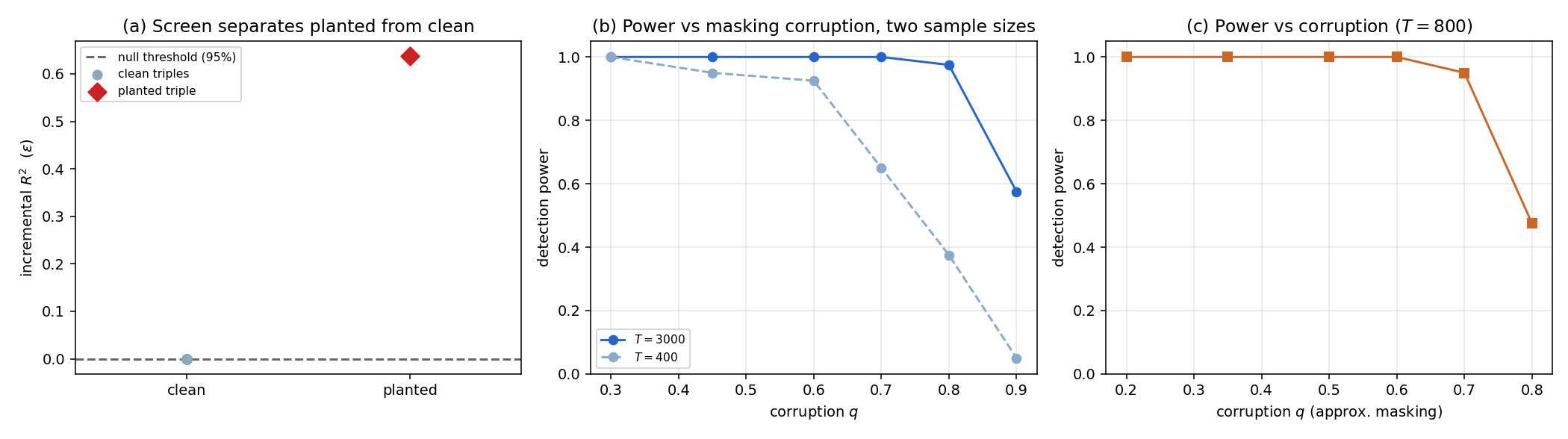}
\caption{Screening protocol on a book with one planted triple. (a) The incremental
$R^2$ separates the planted triple from clean triples; clean triples produce no
false positives against the $95\%$ null threshold. (b) Detection power versus
masking corruption $q$ at $T=400$ and $T=3000$. (c) Power versus corruption at
$T=800$.}
\label{fig:screening}
\end{figure}

\section{Scope and applicability}\label{sec:conclusion}

The price of risk, and through it the premium a causal strategy can realize, is a
function of the admissibility of the driver filtration. For a single portfolio the
conditional price-of-risk attribution decomposes exactly into an intervention-stable
component, a signed confounding wedge, and a nonnegative information loss, with the
loss an $L^2$ projection residual and the wedge a causal object that is not. This
decomposition need not aggregate across portfolios pooling their drivers: we exhibit
an obstruction at order three that is invisible to every singleton and pairwise
admissibility screen, at which the pooled filtration ceases to be admissible and the
projection algebra assigns value to information that anticipates future innovations.
The maximal jointly admissible sub-books are the independent sets of the masking
hypergraph.

Three limitations bound the claims. The wedge requires a specified causal graph and
a valid adjustment set; we do not solve causal discovery \cite{spirtes,peters,shimizu},
which is orthogonal and upstream to the attribution studied here. The experiments
demonstrate that the decomposition is estimable, that the causal correction is
necessary and recoverable, and that the order-three screen detects planted leakage
with controlled false positives; we do not, however, claim a particular empirical
frequency of the obstruction in real books---establishing that is a separate
empirical undertaking. And the obstruction requires two ingredients: a third-order
aggregate signal invisible to pairwise tests, and the coupling of that signal to a
future innovation. A purely adapted crowding mechanism supplies the first but not
the second, so it reproduces the masking algebra without the immersion failure
(Proposition~\ref{prop:crowding}); whether an adapted book can endogenously supply
both---generating a price filtration that reveals a future innovation---is left
open. The contribution is thus both structural and operational: it identifies the
object to look for, the order at which lower-order diagnostics are guaranteed to
miss it, and a concrete screen for it, while leaving its empirical prevalence to
future work.

\appendix
\section{Reproducibility details}\label{app:repro}

All experiments are synthetic and seeded with \texttt{numpy.random.default\_rng};
the parameters below fully specify each design, and the archived code
(\url{https://doi.org/10.5281/zenodo.20843643}) regenerates every figure.

\paragraph{Recovery (Figure~\ref{fig:recovery}).}
Single-driver linear--Gaussian model $U,Z\sim N(0,1)$ independent,
$Y=aU+\varepsilon_Y$ with $\varepsilon_Y\sim N(0,\sigma_Y^2)$, and
$\lambda=bY+cU+dZ$, with $(a,b,c,d)=(1.0,0.5,0.8,0.6)$ and $\sigma_Y^2=1$. The
adjustment set is $U$. The back-door estimator regresses $\lambda$ on $(Y,U)$ and
averages over the empirical marginal of $U$; the confounding-agnostic estimator
regresses $\lambda$ on $Y$ alone. Consistency is assessed over sample sizes
$T\in\{250,\dots,2\times10^5\}$ with $40$ replications per size.

\paragraph{Multi-driver panel (Figure~\ref{fig:multidriver}).}
$K=5$ drivers and $L=3$ latent confounders, with $Y=AU+E$,
$E\sim N(0,S)$ diagonal, and $\lambda=b^\top Y+c^\top U$; the loading matrix $A$,
the idiosyncratic variances $S$, and the coefficient vectors $b,c$ are fixed in the
script. Per-driver credit is $m_i(\Sigma_Y m)_i$ so the credits sum to the
respective premium. Consistency uses the same sample-size grid and $40$
replications.

\paragraph{Order-three obstruction (Figure~\ref{fig:order3}).}
$X=B^1_T-B^1_{T/2}\sim N(0,T/2)$ with $T=2$, $Z=\sgn X$, independent Rademacher
$\varepsilon_1,\varepsilon_2$, and $\varepsilon_3=\varepsilon_1\varepsilon_2 Z$,
with $N=2\times10^6$ samples. Predictive $R^2$ is estimated out of sample on a
$50/50$ split. The permutation null uses $200$ permutations of $X$ against the fixed
driver signs. The noise path replaces the exact relation by
$X=\beta\,\sgn(\varepsilon_1\varepsilon_2\varepsilon_3)+\sigma\eta$,
$\eta\sim N(0,1)$, over $\sigma\in[0,3]$ on a $13$-point grid.

\paragraph{Screening protocol (Figure~\ref{fig:screening}).}
A synthetic book of clean Rademacher drivers with one planted triple
$\varepsilon_3=\varepsilon_1\varepsilon_2 Z$, optionally corrupted by replacing a
fraction $q$ of the relation with independent signs. The screen flags a triple when
its out-of-sample incremental $R^2$ exceeds the upper-$5\%$ quantile of a
permutation null. Detection power is averaged over $40$ replications and traced
against $q\in[0.3,0.9]$ at sample sizes $T\in\{400,800,3000\}$.

\section*{Declarations}
\noindent\textbf{Funding} The author received no specific funding for this work.\\
\noindent\textbf{Conflict of interest} The author declares no conflict of interest.\\
\noindent\textbf{Data availability} All experiments are synthetic and use no
proprietary data. The code reproducing every figure is openly available at
\url{https://doi.org/10.5281/zenodo.20843643} (archived) and
\url{https://github.com/AlejandroRodriguezDominguez/order-three-attribution}
(development); each script is seeded, so every figure is exactly reproducible.


\begin{thebibliography}{99}
\bibitem{abs} A. Abhyankar, D. Basu, A. Stremme, \emph{The optimal use of
return predictability: an empirical study}, Journal of Financial and
Quantitative Analysis \textbf{47} (2012), no.~5, 973--1001.


\bibitem{adi} S. Ankirchner, S. Dereich, and P. Imkeller, \emph{The Shannon
information of filtrations and the additional logarithmic utility of insiders},
Ann. Probab. \textbf{34} (2006), 743--778.

\bibitem{admissible} A. Rodriguez Dominguez, \emph{Admissible information
structures, immersion, and the order of non-anticipative aggregation}, preprint,
2026 (superseding arXiv:2601.12541).

\bibitem{cpcm} A. Rodriguez Dominguez, \emph{Causal PDE-control models: a
structural framework for dynamic portfolio optimization}, preprint
arXiv:2509.09585, 2025.

\bibitem{fersonsiegel} W.~E. Ferson, A.~F. Siegel, \emph{The efficient use of
conditioning information in portfolios}, Journal of Finance \textbf{56} (2001),
967--982.

\bibitem{hansenrichard} L. P. Hansen and S. F. Richard, \emph{The role of
conditioning information in deducing testable restrictions implied by dynamic asset
pricing models}, Econometrica \textbf{55} (1987), no.~3, 587--613.

\bibitem{jacod} J. Jacod, \emph{Grossissement initial, hypoth\`ese (H'), et
th\'eor\`eme de Girsanov}, in S\'eminaire de Calcul Stochastique 1982/83, Lecture
Notes in Math. \textbf{1118}, Springer, 1985, 15--35.

\bibitem{kardaras} C. Kardaras, \emph{Market viability via absence of arbitrage of
the first kind}, Finance Stoch. \textbf{16} (2012), 651--667.

\bibitem{kk} I. Karatzas and C. Kardaras, \emph{The num\'eraire portfolio in
semimartingale financial models}, Finance Stoch. \textbf{11} (2007), 447--493.

\bibitem{mityagin} B. Mityagin, \emph{The zero set of a real analytic function},
Mathematical Notes \textbf{107} (2020), 529--530.

\bibitem{pearl} J. Pearl, \emph{Causality: Models, Reasoning, and Inference}, 2nd
ed., Cambridge University Press, 2009.

\bibitem{peters} J. Peters, P. B\"uhlmann, and N. Meinshausen, \emph{Causal
inference by using invariant prediction: identification and confidence intervals},
J. R. Stat. Soc. Ser. B \textbf{78} (2016), 947--1012.

\bibitem{protter} P. E. Protter, \emph{Stochastic Integration and Differential
Equations}, 2nd ed., Springer, 2004.

\bibitem{robins} J. Robins, \emph{A new approach to causal inference in mortality
studies with a sustained exposure period}, Math. Modelling \textbf{7} (1986),
1393--1512.

\bibitem{shimizu} S. Shimizu, P. O. Hoyer, A. Hyv\"arinen, and A. Kerminen,
\emph{A linear non-Gaussian acyclic model for causal discovery}, J. Mach. Learn.
Res. \textbf{7} (2006), 2003--2030.

\bibitem{spirtes} P. Spirtes, C. Glymour, and R. Scheines, \emph{Causation,
Prediction, and Search}, 2nd ed., MIT Press, 2000.
\end{thebibliography}
\end{document}